\documentclass[runningheads]{llncs}
\usepackage[margin=1in]{geometry}
\usepackage{amsmath}
\usepackage{array,xspace,multirow,hhline,graphicx,xcolor,tikz,colortbl,tabularx,amsmath,amssymb,amsfonts}
\usetikzlibrary{shapes}
\usetikzlibrary{arrows}
\usetikzlibrary{decorations.pathreplacing,calligraphy}

\usepackage{caption,tabularx}

\usepackage[numbers]{natbib}

\makeatletter 
\renewcommand\@biblabel[1]{#1} 
\makeatother

\usepackage{subcaption}
\captionsetup{compatibility=false}
\usepackage{collcell}
\usepackage{hhline}
\usepackage{pgf}
\usepackage{pgfplots}
\usepgfplotslibrary{external}
\tikzexternalize

\usepackage{wrapfig}
\newcommand\eat[1]{}
\usepackage{pgfplots}
\usepackage{bbm}
\usepackage{verbatim,ifthen}
\usepackage{pifont}
\usepackage{ifthen}
\usepackage{pgflibraryshapes}

\usepackage{nicefrac}
\usetikzlibrary{arrows}
\usepackage{ltxtable}
\usepackage{longtable}
\usepackage[normalem]{ulem}
\usepackage{varioref}

\usepackage{calc}
\newsavebox\CBox
\newcommand\hcancel[2][0.5pt]{%
	\ifmmode\sbox\CBox{$#2$}\else\sbox\CBox{#2}\fi%
	\makebox[0pt][l]{\usebox\CBox}%
	\rule[0.5\ht\CBox-#1/2]{\wd\CBox}{#1}}

\tikzset{
	jumpdot/.style={mark=*,solid},
	excl/.append style={jumpdot,fill=white},
	incl/.append style={jumpdot,fill=black},
	rexcl/.append style={jumpdot,color=red,fill=white},
	rincl/.append style={jumpdot,fill=black,color=red},
}
	\newcommand{\stkout}[1]{\ifmmode\text{\sout{\ensuremath{#1}}}\else\sout{#1}\fi}
	
	\usepackage[ruled]{algorithm2e} 
	
	\SetAlFnt{\small}
	\SetAlCapFnt{\small}
	\SetAlCapNameFnt{\small}
	\SetAlCapHSkip{0pt}
	\IncMargin{-\parindent}

	\usepackage{amsmath}
	\usepackage{mathtools}

	\newcommand{\med}{\operatorname{med}}

	\definecolor{gray(x11gray)}{rgb}{0.75, 0.75, 0.75}
	\def\colorModel{rgb} 
	
\DeclareMathOperator*{\argmin}{arg\,min}
	\newcommand\ColCell[1]{
		\pgfmathparse{#1<0.5?1:0}  
		\ifnum\pgfmathresult=0\relax\color{white}\fi
		\pgfmathsetmacro\compA{1-#1}      
		\pgfmathsetmacro\compB{1-#1/1.5} 
		\pgfmathsetmacro\compC{1}      
		\edef\x{\noexpand\centering\noexpand\cellcolor[\colorModel]{\compA,\compB,\compC}}\x #1
	} 
	\newcolumntype{E}{>{\collectcell\ColCell}m{0.5cm}<{\endcollectcell}}  
	
	\begin{document}
		\title{Facility Location Games with Scaling Effects} 
		
		%
		%
		
		
		\author{Yu He \and Alexander Lam \and Minming Li
		}
		\authorrunning{..}
		
		\institute{City University of Hong Kong\\
			\email{yuhe32-c@my.cityu.edu.hk, \{alexlam, minming.li\}@cityu.edu.hk}}
		
		\maketitle              
		\begin{abstract}
We take the classic facility location problem and consider a variation, in which each agent's individual cost function is equal to their distance from the facility multiplied by a scaling factor which is determined by the facility placement. In addition to the general class of continuous scaling functions, we also provide results for piecewise linear scaling functions which can effectively approximate or model the scaling of many real world scenarios. We focus on the objectives of total and maximum cost, describing the computation of the optimal solution. We then move to the approximate mechanism design setting, observing that the agents' preferences may no longer be single-peaked. Consequently, we characterize the conditions on scaling functions which ensure that agents have single-peaked preferences. Under these conditions, we find a characterization of continuous, strategyproof, and anonymous mechanisms, and compute the total and maximum cost approximation ratios achievable by these mechanisms.
\end{abstract}
		\section{Introduction}
		The facility location problem is a widely studied problem, with decades of literature spanning the fields of operations research, economics, and computer science. In the classic variant of the problem, agents are located on a unidimensional interval, and we are tasked with finding an ideal location to place a facility which serves these agents. The agents have single-peaked preferences for the facility location, as they incur a cost equal to their distance from the facility. This problem models many real-world single-peaked social choice and preference aggregation problems. The geographical placement of public facilities such as libraries and parks is an immediate example \citep{Miya01}, and the problem can also be applied to choose a political or economic outcome \citep{BoJo83,FFG16}, or to decide how to divide a budget \citep{FPPV21}. Furthermore, the one-dimensional setting can be generalized to a multi-dimensional setting with an $L_1$ distance metric \citep{PSS92,GoHa23}, or extended to a network setting \cite{ScVo02}.

        In our setting, we assume that the agents' locations are private information, and take a strategyproof mechanism design approach to the classic facility location problem. We also add an additional dimension of complexity: each agent's cost is scaled by an external factor corresponding to the facility's location on the domain. Specifically, there is a (continuous) \emph{scaling function} which maps the facility location to a positive scaling factor which is multiplied by each agent's distance to calculate their individual costs. A local minimum of the scaling function implies that the facility is particularly effective when placed at this point. The primary motivation for our model is the placement of a cell-phone tower, which is most effective when it is on top of a hill, and less effective as it is placed lower on the hill. Each signal broadcasted or received by the radio tower has a failure probability depending on its elevation, and this averages out over time to be the facility’s effectiveness in a multiplicative sense. Other examples include the proximity of the facility to a river or public transportation, or of a warehouse to a shipping port. Similarly, placing the facility at a local maximum is relatively ineffective (such as placing the cell phone tower at the bottom of a valley, or a public facility next to a garbage dump), and the scaling function improves up to an extent as the facility moves away from this point.

\begin{figure}
    \centering   
            \includegraphics[scale=0.56]{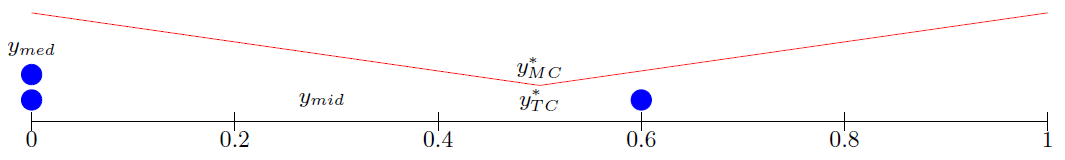}
\caption{Facility location example where the agents (\textcolor{blue}{$\bullet$}) are located at $(0,0,0.6)$, and the scaling function (represented by the red lines) is $2-3y$ for $y\leq 0.5$, and $3y-1$ for $y>0.5$. The median and midpoint facility placements are at $y_{med}=0$ and $y_{mid}=0.3$ respectively, whilst the optimal facility placements in terms of total and maximum cost are at $y^*_{TC}=y^*_{MC}=0.5$.}
    \label{fig:example}
\end{figure}
        
        In our paper, we consider the general class of continuous scaling functions, as well as piecewise linear scaling functions due to their simplicity in modelling real-world problems, and because they can approximate more complex scaling functions. To demonstrate our model, consider the example in Figure~\ref{fig:example}, where there are two agents at $0$ and one agent at $0.6$, and the scaling function is $2-3y$ for $y\leq 0.5$, and $3y-1$ for $y>0.5$. Under the classic model, the median and midpoint mechanisms are known to minimize total and maximum cost, respectively. In our example, they place the facility at $0$ and $0.3$ respectively, resulting in $0.6\times 2=1.2$ total cost and $0.33$ maximum cost. However, placing the facility at the scaling function's minimum of $0.5$ is optimal for both objectives, leading to $0.55$ total cost and $0.25$ maximum cost.

        The goal of our paper is to design strategyproof mechanisms for our scaled facility location problem, that have ideal approximations of the optimal total or maximum cost. These two objectives are standard measures of efficiency and egalitarian fairness in facility location mechanism design \citep{PrTe13}. A key consequence of the scaling function is that the agents' preferences may no longer be single-peaked, and thus we cannot immediately leverage existing characterizations of strategyproof mechanisms under single-peaked preferences (e.g., \citep{Moul80,BoJo83}). We therefore consider the sufficient and necessary conditions on the scaling function such that the agents' preferences are single-peaked, and design reasonable strategyproof mechanisms for such scaling functions.

        \paragraph{Contributions}
        \begin{itemize}
            \item We initiate the study of facility location mechanism design where the facility has a degree of effectiveness depending on where it is placed, and the agents' costs are scaled by this level of effectiveness. Our results focus on continuous scaling functions, as well as the smaller space of piecewise linear scaling functions.
            \item We give computational results on the optimal facility placement for the objectives of total cost and maximum cost.
            \item We show that the agents' preferences may not be single-peaked in our setting, and characterize the conditions on scaling functions such that agents have single-peaked preferences.
            \item When scaling functions meet the conditions for agents to have single-peaked preferences, we characterize the space of continuous mechanisms satisfying strategyproofness and anonymity as \emph{phantom mechanisms with scaling}.
            \item We find the lower bound on the approximation ratios achieved by phantom mechanisms with scaling, and propose such mechanisms with approximation ratios matching this lower bound. Table~\ref{tbl} summarizes some of our approximation ratio results.
        \end{itemize}
  \begin{table}[]
\centering
\bgroup
\def\arraystretch{1.5}
\begin{tabular}{lllll}
\cline{2-5}
           & \multicolumn{2}{l}{General Pref.}                                                               & \multicolumn{2}{l}{Single-Peaked Pref.}                                      \\ \cline{2-5} 
           & \multicolumn{2}{l}{\def\arraystretch{1}\begin{tabular}[c]{@{}l@{}}Continuous and \\ Piecewise Linear $q$\end{tabular}} & Continuous $q$ & Piecewise Linear $q$ \\ \hline
Total Cost & \multicolumn{2}{l}{$r_q$ (Thm.~\ref{thm:totalcostgen})}                                                                       & $e$ (Cor.~\ref{cor: e})         & $(1+\frac{1}{k})^k$ (Cor.~\ref{cor: 2})                                                            \\ \hline
Max. Cost  & \multicolumn{2}{l}{$2r_q$ (Thm.~\ref{thm:maxcostgenLB})}                                                                      & $2e$  (Cor.~\ref{cor: 2e})      &$2(1+\frac{1}{k})^k$ (Cor.~\ref{cor: 4})                                                            \\ \hline
\end{tabular}
\egroup
\caption{Phantom mechanism approximation ratio results for total and maximum cost. Agents have single-peaked preferences when the scaling function $q$ meets certain conditions, resulting in phantom mechanisms being strategyproof. The term $r_q$ denotes the ratio between the max. and min. values of $q$, and $k$ denotes the number of line segments in the piecewise linear $q$. All results are tight in the sense that they are a lower bound for all phantom mechanisms, and that there exists a phantom mechanism with a matching approximation ratio.}
\label{tbl}
\end{table}
\newpage
		\section{Related Work}
		The facility location problem can be interpreted as a special case of preference aggregation when agents have symmetric and single-peaked preferences over a continuous range of outcomes, in which there has been decades of research in strategyproof aggregation rules. For general preferences, the famous Gibbard-Satthertwaite theorem implies that a surjective and strategyproof rule must be dictatorial \citep{Gibb73,Satt75}. This impossibility result vanishes when we consider the restricted domain of single-peaked preferences, in which there have been numerous characterizations of strategyproof rules. Most notably, under this restricted domain, \citet{Moul80}'s seminal paper characterizes the space of strategyproof and anonymous rules as Phantom mechanisms which output the median of the agent locations and $n+1$ constant `Phantom' values. This characterization was later extended to the domain of symmetric and single-peaked preferences by \citet{BoJo83} and \citet{MaMo11a}. Other characterizations of strategyproof rules have been proposed for similar settings by \citep{BaJa94,NePu06,NePu07a,JLPV23}.
		
		The paper most similar to our work considers the facility location model with entrance fees \citep{MXBK23}, in which agents incur, in addition to their distance from the facility, an additive cost which depends on the facility placement. In this model, the authors give strategyproof mechanisms and compute their approximation ratios with respect to the total and maximum cost objectives. The key difference with our model is that our agents experience a multiplicative scaling factor to their cost, whilst their agents experience an additional additive cost. Our work and \citep{MXBK23} are inspired by the line of research on approximate mechanism design without money, initially proposed for the facility location problem by \citet{PrTe13}. In their paper, the authors compute the worst-case ratio between the performance of strategyproof mechanisms and welfare-optimal mechanisms. There have since been many papers applying this approach to variations of the facility location problem, such as when there are proportionally fair distance constraints \citep{ALLW21}, when facilities have capacity limits \citep{ACLP20,ACLL+20}, or when the facility is obnoxious (i.e., agents prefer to be far away from the facility) \citep{CYZ13}. A recent survey of related work is given by \citet{CFLLW21}.

		\section{Model}
		We have a set of agents $N=\{1,\dots,n\}$, where agent $i$ has location $x_i$ on the domain\footnote{Although we consider the unit interval domain, our results can be scaled and shifted to any compact interval on $\mathbb{R}$. We consider a bounded domain so that a single-peaked linear scaling function does not take negative values.} $X:=[0,1]$. Although we consider the space of potentially non-anonymous mechanisms, we assume for simplicity that agent locations are ordered such that $x_1\leq \dots \leq x_n$. This does not affect the nature of our results. A scaling function $q: X\rightarrow \mathbb{R}_{>0}$ gives the effectiveness of a facility.\footnote{If there is a point $y$ where $q(y)=0$, then it is trivial to place the facility at that point.} We also assume that the scaling function is continuous. As we will show in Proposition~\ref{prop:discon}, continuity of $q$ is required for the optimal facility location to be well-defined. 
  
        Denoting $\mathcal{Q}$ as the space of all scaling functions, a \emph{continuous}\footnote{For mechanisms, we define continuity with respect to the agents' locations. Formally, we say a mechanism is continuous if $\forall \mathbf{x}\in X^n: \forall \epsilon>0:\exists \delta>0:\forall \mathbf{x}'\in X^n, \forall q\in \mathcal{Q}: \lVert \mathbf{x}-\mathbf{x}'\rVert_1<\delta \implies \lVert f(q,\mathbf{x})-f(q,\mathbf{x}')\rVert_1<\epsilon$.} facility location mechanism $f: \mathcal{Q}\times X^n\rightarrow X$ maps the agent location profile $\boldsymbol{x}=(x_1,\dots,x_n)$ to the location of a facility $y$. We define the cost incurred by agent $i$ as its distance from the facility multiplied by the scaling factor: $c_i(q,y):= q(y)|y-x_i|$. Finally, we define the total cost of an instance as $TC(q,y,\boldsymbol{x}):=\sum_i c_i(q,y)=q(y)\sum_i |y-x_i|$, and the maximum cost of an instance as $MC(q,y,\boldsymbol{x}):=\max_i c_i(q,y)=q(y)\max_i |y-x_i|$. Respectively, we denote the optimal facility location which minimizes the total cost (resp. maximum cost) as $y^*_{TC}$ (resp. $y^*_{MC}$).
		
		It is typically ideal for the mechanism output to be independent of the agents' labelling, so we are concerned with mechanisms that satisfy \emph{anonymity}.
        \begin{definition}
        A mechanism $f$ is \emph{anonymous} if its output does not change when the agents' labels are permuted.
        \end{definition}

		\section{Computing the Optimal Solution}
  In this section, we investigate the properties of the optimal facility locations for total cost $y^*_{TC}$ and maximum cost $y^*_{MC}$. We begin by justifying our assumption that the scaling function $q$ must be continuous, as a discontinuity may result in the optimal solution being not well-defined.
  \begin{proposition}\label{prop:discon}    
			There exists a scaling function with a discontinuity and an agent location profile such that the optimal facility locations for total cost $y^*_{TC}$ and maximum cost $y^*_{MC}$ are not well-defined.
		\end{proposition}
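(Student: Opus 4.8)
The plan is to exhibit a single concrete counterexample that simultaneously breaks well-definedness for both objectives. The key insight is that a discontinuity in $q$ creates the possibility of approaching a low scaling value without ever attaining it, so that the infimum of the cost is not achieved. Specifically, I would place the discontinuity at a point where the distance term is strictly positive for every agent, so that the product $q(y)|y-x_i|$ inherits the discontinuity and the cost can be driven down toward a limit that is never reached.

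Concretely, I would take a single agent at $x_1 = 0$ (or a small symmetric profile) and define a scaling function that is, say, constant at some value $a$ on $[0,1)$ but jumps down to a strictly smaller value $b < a$ only in the limit — the cleanest construction is a function that is right-continuous or left-continuous at the jump in the ``wrong'' direction. For instance, let $q(y) = 1$ for $y < 1/2$ and $q(y) = 2$ for $y \geq 1/2$; then as $y \to (1/2)^-$ the total cost $q(y)\cdot|y - 0| = y$ approaches $1/2$, but at $y = 1/2$ the cost jumps to $1$, so the infimum over $[0,1/2]$ is not attained on the right endpoint. I would arrange the jump so that the infimum of the cost function is an infimum but not a minimum: the cost decreases as $y$ approaches the discontinuity from one side, yet at the discontinuity itself the scaling factor is strictly larger, making every candidate minimizer improvable by moving infinitesimally closer to the jump.

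The main steps, in order, are: (i) fix the agent profile and the discontinuous $q$; (ii) write out $TC(q, y, \boldsymbol{x})$ explicitly as a piecewise function of $y$ and show it has an infimum that is approached but not attained, hence $\argmin_y TC$ is empty and $y^*_{TC}$ is not well-defined; (iii) repeat the same analysis for $MC(q, y, \boldsymbol{x})$, which for a single agent (or symmetric agents) behaves analogously so that $y^*_{MC}$ is likewise not well-defined. Since $MC$ and $TC$ coincide up to a factor of $n$ for a single agent, one carefully chosen example handles both objectives at once, which is the tidiest route.

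The part requiring the most care is ensuring the infimum is genuinely not attained \emph{anywhere} on the domain, not merely at the discontinuity. I must verify that no other point $y$ in $[0,1]$ achieves a cost equal to the infimum — in particular that moving the facility away from the discontinuity strictly increases the distance term enough to dominate, and that the closed endpoints of the domain do not accidentally attain the infimal value. I would confirm this by checking monotonicity of the cost on each piece and comparing the one-sided limits at the jump against all attained values; the essential phenomenon is that compactness of the domain fails to guarantee a minimizer once the objective is discontinuous, so the Weierstrass extreme value theorem no longer applies.
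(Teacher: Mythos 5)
Your construction fails at its core step, and the failure is structural rather than a matter of detail: with a single agent at $x_1=0$, the cost $q(y)\,|y-x_1|$ equals $0$ at $y=0$ no matter what $q$ is, so the global minimum is always attained at the agent's location and both $y^*_{TC}$ and $y^*_{MC}$ are perfectly well-defined. No single-agent profile (nor any profile where all agents coincide) can ever witness this proposition, so the ``tidiest route'' you chose is a dead end. Moreover, your monotonicity reading of the example is backwards: on $[0,1/2)$ the cost is $y$, which \emph{increases} toward the jump, so the value $1/2$ approached as $y\to(1/2)^-$ is the supremum of that piece, not its infimum; and since your $q$ jumps \emph{up} (from $1$ to $2$) at $y=1/2$, the discontinuity only makes the right-hand side worse and produces no non-attainment phenomenon anywhere. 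Your opening principle --- put the jump where the distance term is strictly positive --- is the right idea, but your example violates its purpose: the relevant quantity is not that the distance is positive at the jump, but that the cost strictly \emph{decreases} as $y$ approaches the jump from the side where the scaling limit is lower, with the limiting value strictly below every attained cost.

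To repair the argument you need at least two distinct agent locations, so that the distance term (sum or maximum) is bounded away from zero near the discontinuity, and you need the scaling function to drop in the limit on one side of the jump. The paper's example does exactly this: take $\mathbf{x}=(0,0.5,1)$ with $q(y)=1$ on $[0,0.5]$ and $q(y)=0.5$ on $(0.5,1]$. Then for $y\in(0.5,1]$ the total cost is $0.5\,(y+0.5)$, which decreases to $0.5$ as $y\to 0.5^+$ but equals $1$ at $y=0.5$ and increases for larger $y$, while on $[0,0.5]$ the cost is at least $1$; hence the infimum $0.5$ is approached but never attained, and an analogous computation handles the maximum cost. Your steps (ii)--(iii) and your final ``verify the infimum is attained nowhere'' check are the right proof skeleton --- they just need to be run on an instance of this shape rather than on a single-agent instance, where the check necessarily fails.
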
		
		\begin{proof}
			Consider the location profile $\textbf{x} = (0,0.5,1)$ and the scaling function  
			$
			q(y) = \begin{cases}
				1, & y \in [0,0.5],\\
				0.5, & y \in (0.5,1].
			\end{cases}
			$
			The optimal facility locations for total cost and maximum cost are the right limit $y^*\rightarrow 0.5^+$, but not at $y^*=0.5$, so the optimal solutions are not well-defined. \qed
		\end{proof}
  \subsection{Total Cost}
        We now give computational results on the optimal facility location for total cost, beginning with the class of linear scaling functions.
		\begin{proposition}\label{prop:optlinear}
			If $q$ is a linear function, the optimal facility location for total cost $y_{TC}^*$ is either on one of the agents' locations, or the minimum value of $q$.
		\end{proposition}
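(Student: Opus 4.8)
The plan is to exploit the piecewise structure of the total cost. Writing the linear scaling function as $q(y)=ay+b$ and the aggregate distance as $D(y):=\sum_i|y-x_i|$, note that $D$ is convex and piecewise linear with breakpoints exactly at the agent locations $x_1\le\cdots\le x_n$; on the subinterval between two consecutive locations its slope is the constant $c:=|\{i:x_i<y\}|-|\{i:x_i>y\}|$, which increases from $-n$ to $n$ as $y$ sweeps left to right, being negative to the left of the median and nonnegative to its right. Since $TC(q,y,\boldsymbol{x})=q(y)D(y)$ is a product of two affine functions on each such subinterval, it is quadratic there, so $y^*_{TC}$ must lie at a breakpoint (an agent location or a domain endpoint) or at an interior critical point of one of these quadratics.

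First I would rule out interior minima. On the interior of a subinterval the product rule gives $TC'(y)=aD(y)+q(y)\,c$, an affine function of $y$ with slope $2ac$. Assume first $a>0$, so $q$ is increasing with minimizer $y=0$. On every subinterval at or to the right of the median we have $c\ge 0$, hence $TC'(y)=aD(y)+q(y)c>0$ because $a,q>0$, $c\ge 0$, and $D(y)>0$; thus $TC$ is strictly increasing there, has no interior minimizer, and in particular $y=1$ is never optimal. To the left of the median we have $c<0$, so the slope $2ac$ of $TC'$ is negative, meaning $TC$ is concave on each such subinterval and its restricted minimum is attained at an endpoint. Either way, the minimum of $TC$ over each subinterval occurs at one of its endpoints.

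It then remains to collect the endpoints. Every breakpoint other than a domain boundary is an agent location, and the argument above shows that among the two boundary points only $y=0$ can be optimal when $a>0$, which is precisely the minimizer of $q$. The case $a<0$ is symmetric, with the roles of $0$ and $1$ interchanged so that the candidate boundary point is $y=1$, again the minimizer of $q$; the case $a=0$ is immediate, since $TC=bD$ is minimized at the median, an agent location. Combining the three cases yields that $y^*_{TC}$ is either an agent location or the minimizer of $q$.

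The main obstacle I anticipate is the interior-critical-point analysis: one must argue cleanly that every stationary point of $TC$ inside a subinterval is a local maximum rather than a minimum (this is what the sign of the slope $2ac$ encodes), and then correctly identify which of the two domain endpoints can actually realize the global minimum and tie it to the minimizer of $q$. The boundary bookkeeping is where an otherwise routine convexity-style argument can go wrong, since $TC$ itself is neither convex nor concave across the whole domain.
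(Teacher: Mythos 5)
Your proof is correct, and it shares the paper's overall skeleton: both arguments reduce by symmetry to $a>0$ (handling $a=0$ via the median), decompose the domain into subintervals between consecutive agent locations, observe that $TC=q\cdot D$ is quadratic on each piece, and conclude that the minimizer must be an agent location or the left endpoint $0=\argmin_y q(y)$. Where you genuinely diverge is in the crucial case where the quadratic opens \emph{upwards} (in your notation $a>0$, $c>0$; in the paper's, $a'=n_1-n_2>0$), since only there could an interior vertex be a minimum. The paper handles it by explicitly locating the two real roots $z_1=-m/a'$ (the zero of $D$) and $z_2=-b/a$ (the zero of $q$) and showing via inequalities on $\sum_i x_i$ that both lie strictly left of the subinterval, so the quadratic is increasing on it. You instead organize the case split around the median: right of the median ($c\ge 0$) you get monotonicity directly from $TC'(y)=aD(y)+q(y)c>0$, and left of the median ($c<0$) you get concavity from $TC''=2ac<0$, so endpoints suffice everywhere. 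Your sign argument is cleaner and avoids the paper's root-location algebra entirely; the paper's version, in exchange, pins down more explicitly \emph{which} endpoint (the left one) is optimal in the upward-opening case. One pedantic caveat in your write-up: the strict inequality $TC'(y)>0$ uses $D(y)>0$, which fails only in the degenerate profile where all agents coincide at $y$ --- but then that point is an agent location with zero cost, so the claim holds trivially and no gap results.
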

		\begin{proof}
			Since $q$ is a linear function, it is of the form $q(y) = ay+b$, where $a\in \mathbb{R}$ and $b\in \mathbb{R}_{>0}$. If $a=0$, then $q(y)$ is constant, and thus the optimal facility location is on the median agent.\footnote{If there are an even number of agents, we break the tie in favour of the leftmost agent in the optimal interval.}
			
			Due to symmetry, it suffices to consider the case where the scaling function is increasing, i.e., $a>0$. From this assumption, we know that the optimal solution cannot lie to the right of $x_n$.
   
			Since $q$ is increasing with $y$, its minimum value is on $y = 0$. We denote $x_0:=0$, and we know the optimal solution $y^*$ is in the interval $[x_0,x_n]$. Suppose the optimal solution $y^*$ is in the interval $[x_{n_1},x_{n_1+1}]$, where $n_1 \in [0,n-1]$ denotes the number of agents on the left-hand side of $y^{*}$. Also, let $n_2 = n-n_1$ denote the number of agents on the right-hand side of $y^{*}$.
			
			Recall that the total cost corresponding to $y^*$, $TC(q,y^*,\textbf{x})$ is $q(y^{*})\sum_i|y^*-x_i|$.
			Considering the term $\sum_i |y^*-x_i|$, we have
			\begin{align*}
				\sum_{i=1}^n |y^*-x_i| & = \sum_{i=1}^{n_1}(y^*-x_i)+\sum_{i=n_1+1}^{n}(x_i-y^*)\\
				& = (n_1-n_2)y^* + \sum_{i=n_1+1}^{n}x_i - \sum_{i=1}^{n_1}x_i.
			\end{align*}
			We remark that $\sum_i |y^*-x_i|$ can also be written as a linear function of $y^*$: $\sum_i |y^*-x_i| = a'y^{*}+m$ with $a' := n_1-n_2$, and $m := \sum_{i=n_1+1}^{n}x_i-\sum_{i=1}^{n_1}x_i$. We can therefore express the total cost as the following function of $y^*$.
			\begin{align*}
				TC(q,y^*,\textbf{x}) & = q(y^{*})\cdot \sum_i|y^*-x_i|\\
				& = (ay^*+b)(a'y^*+m)\\
				& = aa'(y^*)^2+ (am+a'b)y^*+bm.
			\end{align*}
			Now we know that the total cost function can either be a linear function or a quadratic function of $y^{*}$, depending on whether $a'=0$, $a'<0$, or $a'>0$ (recalling that $a>0$). There are three cases: either the total cost function is a linear function ($a'=0$), a quadratic function which opens downwards ($a'<0$), or a quadratic function which opens upwards ($a'>0$). For the first two cases, it is clear that the minimum value of the total cost function must lie on an endpoint of the interval $[x_{n_1},x_{n_1+1}]$. It remains to consider the last case ($a'>0$), where the total cost function $ q(y^*)\sum_i|y^*-x_i| = aa'y^*+ (am+a'b)y^*+bm$ opens upwards.
			Consider the total cost function on the larger domain of $\mathbb{R}$. Since $a'=(n_1-n_2)>0$, we know there exists $z_1\in \mathbb{R}$ such that $a'z_1+m=0$. Also, since $a > 0$, there exists $z_2\in \mathbb{R}$ such that $q(z_2)=0$. As a result, we see that the total cost function is equal to zero at either one or two points on $\mathbb{R}$; we now show that both of these points lie to the left of $x_{n_1}$.
			
            For the term $a'y^{*}+m$, we know that $a'z_1+m =0$, and thus 
            $z_1=-\frac{m}{a'}=\frac{\sum_{i=n_1+1}^{n}x_i-\sum_{i=1}^{n_1}x_i}{n_2-n_1}.$
Consider the numerator $\sum_{i=n_1+1}^{n}x_i-\sum_{i=1}^{n_1}x_i$. We know there are $n_2$ agents between $(x_{n_1},x_n]$, implying $\sum_{i=n_1+1}^{n}x_i > n_2 \cdot x_{n_1}$. Similarly, we know there are $n_1$ agents between $[0,x_{n_1}]$, implying $\sum_{i=1}^{n_1}x_i \leq n_1 \cdot x_{n_1}$. Combining these inequalities gives us
 $$\sum_{i=n_1+1}^{n}x_i-\sum_{i=1}^{n_1}x_i>n_2 x_{n_1} - n_1 x_{n_1}>0.$$
			As we have assumed $a'>0$, we know $n_2-n_1<0$ and thus
   $z_1=\frac{\sum_{i=n_1+1}^{n}x_i-\sum_{i=1}^{n_1}x_i}{n_2-n_1}<x_{n_1}.$

   Now consider $z_2$, which satisfies $q(z_2)=0$. Clearly, $z_2<0$ as we must have $q(y)>0$ for $y\in [0,1]$ and $a>0$, implying $z_2<x_{n_1}$.
   
			Since $z_1, z_2<x_{n_1}$, the total cost function must be strictly increasing in $[x_{n_1},x_{n_1+1}]$, so its minimum value must lie on $x_{n_1}$, which is either an agent location or the minimum value of $q$. \qed
		\end{proof}
		
		We can generalise this result to continuous, \emph{piecewise} linear scaling functions.		
		\begin{theorem}
			Let the scaling function $q$ be a continuous, piecewise linear function. The facility location minimizing total cost $y_{TC}^*$ is either on one of the agents' locations or on a local minimum of $q$.
		\end{theorem}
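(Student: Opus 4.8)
The plan is to reduce the continuous, piecewise linear case to the single-piece case already handled in Proposition~\ref{prop:optlinear}. First I would note that since $q$ and $\sum_i|y-x_i|$ are both continuous and piecewise linear on the compact domain $[0,1]$, their product $TC(q,y,\mathbf{x})$ is continuous and piecewise quadratic, so a global minimizer $y^*$ exists. I would then partition $[0,1]$ using the union of the breakpoints of $q$ and the agent locations $x_1,\dots,x_n$. On each resulting subinterval both $q$ and $\sum_i|y-x_i|$ are affine, so $TC$ is a genuine quadratic there, exactly as in the proof of Proposition~\ref{prop:optlinear}.

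Next I would show that on each such subinterval the minimum of $TC$ is attained at an endpoint. The only nontrivial case is an upward-opening parabola whose vertex might lie in the interior; here I would reuse the computation from Proposition~\ref{prop:optlinear}, applied to the linear piece of $q$ containing the subinterval (and its mirror image for a decreasing piece), to show that the vertex lies strictly outside the subinterval, so that $TC$ is monotone on it. Consequently the global minimizer may be taken to be one of the partition points: an agent location, a breakpoint of $q$, or a domain endpoint.

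It then remains to discard those partition points that are breakpoints (or endpoints) of $q$ but are neither agent locations nor local minima of $q$. Suppose $y^*$ is such a point and is not an agent location. Applying Proposition~\ref{prop:optlinear} to the maximal linear piece of $q$ immediately to the right of $y^*$, its minimum over that piece is attained at an agent in the piece or at the piece-endpoint minimizing $q$; since $y^*$ realizes the global minimum and is not an agent, this forces $q$ to be non-decreasing to the right of $y^*$ (modulo the ties discussed below). The symmetric argument on the left piece forces $q$ to be non-increasing to the left of $y^*$. Together these say $y^*$ is a local minimum of $q$, a contradiction. Boundary points $0$ and $1$ are covered by the same argument using their single adjacent piece.

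The main obstacle is the handling of ties in this final step: when $q$ is, say, strictly decreasing on the right piece, $y^*$ need not be that piece's distinguished candidate, yet it can still attain the global minimum value simultaneously with an agent location or an adjacent breakpoint. I would resolve this by relocating $y^*$ to that alternative minimizer and iterating; since each relocation moves strictly toward an agent or toward a breakpoint further out, and there are finitely many breakpoints, the process terminates at an agent location or a genuine local minimum of $q$. A secondary technical point is that the vertex-location computation of Proposition~\ref{prop:optlinear} must be re-derived for an arbitrary linear piece rather than one anchored at the global minimum of $q$, since agents may now lie outside the piece under consideration.
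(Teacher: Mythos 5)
Your steps 1--2 (partition $[0,1]$ by the agent locations and the breakpoints of $q$, then show the quadratic on each cell attains its minimum at a cell endpoint) are sound, and are in fact a more careful rendering of the paper's own proof, which simply invokes Proposition~\ref{prop:optlinear} segment by segment. But your step 3 --- the pruning of breakpoints that are not local minima of $q$ --- fails, and the culprit is exactly the issue you dismissed as a ``secondary technical point.'' When agents lie outside a linear piece, the correctly re-derived analogue of Proposition~\ref{prop:optlinear} does \emph{not} say that the minimum of $TC$ over that piece is at an agent in the piece or at the piece's $q$-minimizing endpoint: if $q$ is increasing on the piece but most of the agent mass lies to its right, the distance sum falls faster than $q$ rises, $TC$ is strictly decreasing across the whole piece, and its minimum sits at the $q$-\emph{maximizing} endpoint. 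Consequently your inference ``$y^*$ is not an agent $\Rightarrow$ $q$ is non-increasing on the piece to the left of $y^*$'' is invalid, and no tie-breaking iteration can repair it.

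Concretely, take $n=3$ agents at $(0,1,1)$ and the continuous, positive, two-piece function $q(y)=0.4y+0.8$ on $[0,0.5]$ and $q(y)=2y$ on $[0.5,1]$, which is strictly increasing on all of $[0,1]$. The distance sum is $2-y$, so $TC(q,y,\boldsymbol{x})=1.6-0.4y^2$ on $[0,0.5]$ and $TC(q,y,\boldsymbol{x})=4y-2y^2$ on $[0.5,1]$; this is strictly decreasing on $[0,0.5]$ and strictly increasing on $[0.5,1]$, so the unique minimizer is $y^*=0.5$, which is neither an agent location nor a local minimum of $q$ (the only local minimum of $q$ is $0$, where $TC=1.6>1.5$). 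So the statement you are trying to establish in step 3 --- the theorem read literally, with ``local minimum of $q$'' meaning a local minimum of the function $q$ --- is false, and what is actually provable is the weaker claim that the paper's proof asserts: $y^*$ lies at an agent location or at some $e_j$, the endpoint of an individual \emph{segment} $q_j$ at which $q_j$ is smallest. That candidate set contains every non-locally-maximal kink (in the example, $0.5=e_2$ for the second piece), not only local minima. Note that the paper's one-paragraph proof commits the same per-piece misapplication of Proposition~\ref{prop:optlinear} that you do (in the example above, the minimum of $TC$ over the first piece is at $0.5$, which is not in $\{e_1\}\cup\{x_1,\dots,x_n\}$), but its final candidate set $\{e_j\}_j\cup\{x_i\}_i$ survives this flaw, whereas your sharper pruning to local minima of $q$ does not.
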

		\begin{proof}
			First, consider each line segment on $q$, which is represented by a linear function $q_j$ on the sub-domain $[c_j,d_j]$, where $c_j,d_j\in X$. Let $y_j^*:=\arg \min_{y_j\in [c_j,d_j]} TC(q,y_j,\boldsymbol{x})$ denote the optimal facility placement on $[c_j,d_j]$, and $e_j:=\arg \min_{y\in \{c_j,d_j\}}q(y)$ be the minimum point of $q$ on $[c_j,d_j]$. By Proposition~\ref{prop:optlinear}, we must have $y_j^*\in \{e_j,x_1,\dots,x_n\}$. Therefore, considering all of the $q_j$ functions, the optimal solution $y^*$ must lie at one of the agents' positions or at a local minimum of $q_j$. \qed
		\end{proof}
		\begin{corollary}
			When $q$ is a continuous, piecewise linear function with a constant number of local minima, the optimal facility location can be found in linear time by simply iterating over the agent locations and the local minima of the scaling function.
		\end{corollary}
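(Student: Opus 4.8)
The plan is to combine the candidate characterization from the preceding theorem with a prefix-sum precomputation that lets the total cost be evaluated at each candidate in constant time, so that iterating over all candidates costs only $O(n)$.

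First I would fix the candidate set. By the theorem, the optimal location $y^*_{TC}$ lies in $\{x_1,\dots,x_n\}\cup M$, where $M$ is the set of local minima of $q$. Since $q$ has a constant number of local minima, $|M|=O(1)$, so there are only $O(n)$ candidates in total. A direct approach evaluates $TC(q,y,\boldsymbol{x})=q(y)\sum_i|y-x_i|$ at each candidate and returns the minimizer; however, naively computing the distance sum $\sum_i|y-x_i|$ costs $\Theta(n)$ per candidate, giving a quadratic $\Theta(n^2)$ running time. The crux is therefore to evaluate each candidate faster.

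To this end I would precompute the prefix sums $S_j:=\sum_{i=1}^{j}x_i$ for $j=0,\dots,n$, which takes $O(n)$ time since the agents are already ordered as $x_1\le\dots\le x_n$. For any query point $y$ with exactly $j$ agents to its left, the distance sum admits the closed form
\[
\sum_{i=1}^{n}|y-x_i| = (2j-n)y - 2S_j + S_n,
\]
which is evaluable in $O(1)$ once $j$ and $S_j$ are known. The scaling factor $q(y)$ is likewise obtainable in $O(1)$ by locating the linear piece of $q$ containing $y$, so each $TC$ value costs $O(1)$.

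It remains to organize the iteration so that the indices $j$ are available cheaply. For the agent locations, sweeping left to right in sorted order makes $j$ coincide with the running index, so each of these $n$ candidates is handled in $O(1)$. For each of the $O(1)$ local minima in $M$, I would locate the corresponding $j$ by a binary search over the sorted locations (or a single scan), contributing only $O(\log n)$ apiece and $O(n)$ in aggregate. Tracking the smallest $TC$ value encountered and returning its argument then yields the optimum in $O(n)$ total time. The main obstacle is precisely this constant-time evaluation step: without the prefix-sum reformulation the repeated distance sums force a quadratic blowup, so the linear bound hinges on exploiting the sorted order to reduce each evaluation to the closed form above.
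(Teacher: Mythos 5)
Your proposal is correct and takes the approach the paper intends: the corollary is stated in the paper \emph{without} proof, as an immediate consequence of the preceding theorem, and the algorithm is exactly the one described in the corollary's own statement --- iterate over the $n$ agent locations and the $O(1)$ local minima of $q$, returning the candidate with smallest total cost. Your prefix-sum refinement is a genuine addition in rigor rather than a different route: evaluating $\sum_i |y-x_i|$ from scratch at each of the $O(n)$ candidates would cost $\Theta(n^2)$, so the paper's ``linear time'' claim actually relies on something like your closed form $(2j-n)y - 2S_j + S_n$ (or an equivalent incremental sweep over the sorted $x_i$), which the paper leaves implicit. The only loose end is your assertion that the linear piece of $q$ containing a query point can be located in $O(1)$ time; this is immediate only if the number of pieces is constant (the corollary assumes only a constant number of local \emph{minima}), but a single merged sweep over the sorted breakpoints of $q$ and the sorted agent locations keeps the total running time linear in the input size, so the claim survives.
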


        As we assume agent locations are private information, there is a concern that agents may misreport their locations to unfairly attain a better facility placement. It is therefore ideal to implement a \emph{strategyproof} mechanism, which does not incentivize agents to lie about their locations.
		\begin{definition}[Strategyproofness]
			A mechanism $f(\cdot)$ is \emph{strategyproof} if for every agent $i \in N$, we have, for every scaling function $q$ and agent locations $x_i'$, $\boldsymbol{x}_{-i}$ and $x_i$,
			$$
			c_i(q,f(x_i,\textbf{x}_{-i})) \leq c_i(q,f(x_i',\textbf{x}_{-i})).
			$$
		\end{definition}
		However, we find that the optimal solution is not strategyproof.
		\begin{proposition}
			The optimal facility placement which minimizes total cost is not strategyproof, even if the scaling function is a linear function and there are only 3 agents.
		\end{proposition}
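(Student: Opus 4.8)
Since this is an impossibility result, the plan is to exhibit a single counterexample: a linear scaling function $q$ together with three agent locations and one agent who strictly benefits from misreporting. The main leverage comes from Proposition~\ref{prop:optlinear}, which tells us that for a linear $q$ the optimal location always lies in the finite candidate set consisting of the agent locations together with the minimizer of $q$ (an endpoint of $[0,1]$). This reduces the verification of optimality, both before and after the misreport, to comparing total cost at finitely many points.

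First I would fix an increasing linear scaling function that is steep enough to create tension between two candidate optima, e.g. $q(y)=1+4y$, and place the three agents in a tight cluster away from the cheap left endpoint, e.g. at $(0.2,0.3,0.4)$. Evaluating $TC(q,\cdot,\mathbf{x})$ at the candidates $\{0,0.2,0.3,0.4\}$, the truthful optimum sits at the middle agent $0.3$, since the small total distance there outweighs the larger scaling factor, while the endpoint $y=0$ is close behind.

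Next I would let the leftmost agent (true location $0.2$) misreport $x_1'=0$. On the reported profile the candidate set becomes $\{0,0.3,0.4\}$, and because the reported leftmost agent now coincides with the cheap end of $q$, the total cost at $y=0$ drops below that of the median, so the mechanism relocates the facility to $y=0$. The crux is then a single comparison of the manipulator's \emph{true} cost: at the truthful optimum it pays $q(0.3)\,|0.3-0.2|$, whereas after the deviation it pays $q(0)\,|0-0.2|$; since $q(0)$ is far smaller than $q(0.3)$, the reduced scaling factor more than compensates for the increased distance, yielding a strictly smaller cost and contradicting strategyproofness.

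The delicate part is the tuning rather than any deep argument. With only three agents the median candidate is ``sticky'' — shifting a single report cannot move the middle value much — so the manipulation must instead be routed through the competition between the median and the minimum-of-$q$ endpoint, exploiting the fact that relocating the facility to the low-scaling end can lower an agent's cost even though its distance grows. I would therefore choose the slope of $q$ and the cluster position so that simultaneously (i) the truthful optimum is strictly the interior agent, (ii) the single misreport tips the balance strictly in favour of the endpoint, and (iii) the scaling saving strictly exceeds the distance penalty for the manipulator. All three are simultaneously achievable, and each reduces to an elementary inequality that I would verify directly.
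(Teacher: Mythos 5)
Your proposal is correct and takes essentially the same approach as the paper: an explicit three-agent counterexample with a linear scaling function, where one agent misreports toward the cheap end of $q$ so that the total-cost optimum flips from an interior agent location to the endpoint minimizing $q$, lowering the manipulator's true cost (the paper uses the mirror image, a decreasing $q(y)=1-\tfrac{4}{5}y$ with agents $(0.3,0.4,0.7)$ and a misreport pulling the facility to $y=1$). Your numbers verify cleanly: truthfully the facility sits at $0.3$ with total cost $0.44$ and the manipulator pays $0.22$; after reporting $x_1'=0$ the facility moves to $0$ (total cost $0.7$ versus $0.88$ at $0.3$) and the manipulator's true cost drops to $0.2$.
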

		\begin{proof}
			Suppose the scaling function is
			$q(y)=1-\frac{4}{5}y$
			and the agent location profile is $x=(0.3,0.4,0.7)$. The optimal facility placement is $y=0.4$, which leads to a cost of $0.204$ for agent $x_3$. However, if agent $x_3$ misreports $x_3'=0.755$, then the facility placement is instead $y'=1$, which leads to a cost of $0.06$. Hence the optimal facility placement is not strategyproof. \qed
		\end{proof}
        \subsection{Maximum Cost}
        We now consider the optimal facility location for maximum cost. It is well-known that in the classic setting, placing the facility at the midpoint of the leftmost and rightmost agents is optimal for maximum cost but not strategyproof \citep{PrTe13}. This implies that $y^*_{MC}$ is not strategyproof in our setting. However, it is easily computable in constant time when $q$ is a linear function, as there are only three possible solutions.
            \begin{proposition}\label{prop:optmaxlinear}
                If $q$ is a linear function, the optimal facility location for maximum cost satisfies $y^*_{MC}\in \{0,\frac{x_1+x_n}{2},1\}$.
            \end{proposition}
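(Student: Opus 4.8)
The plan is to exploit the simple structure of the farthest-agent distance. Since every agent lies in $[x_1,x_n]$, the farthest agent from any point $y$ is one of the two extreme agents, so $\max_i|y-x_i|=\max(x_n-y,\,y-x_1)$. This is a V-shaped (convex, piecewise linear) function: writing $m:=\frac{x_1+x_n}{2}$, it equals $x_n-y$ on $[0,m]$ and $y-x_1$ on $[m,1]$, with its unique kink at the midpoint $m$. Consequently $MC(q,y,\boldsymbol{x})=q(y)\max_i|y-x_i|$ is, on each of the two pieces, a product of two affine functions of $y$, hence a quadratic. I would minimize over each piece separately and show that each piece's minimum is attained at a point of $\{0,m,1\}$.

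Writing $q(y)=ay+b$ with $b>0$, I would first reduce to the case $a\ge 0$ by the reflection symmetry $y\mapsto 1-y$, which turns a decreasing scaling function into an increasing one and maps $\{0,m,1\}$ to the corresponding set for the reflected instance (note that $1-m$ is the midpoint of the reflected agents). The case $a=0$ is the classical constant-scaling setting, where the midpoint $m$ is optimal, so it remains to treat $a>0$.

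On the left piece $[0,m]$, $MC=(ay+b)(x_n-y)$ has leading coefficient $-a<0$, so it is concave; a concave function attains its minimum over an interval at an endpoint, giving a minimizer in $\{0,m\}$. On the right piece $[m,1]$, $MC=(ay+b)(y-x_1)$ has leading coefficient $a>0$ and is therefore convex, so a priori its minimum could be interior --- this is the one real obstacle. The crux is to locate the vertex: it sits at $y_v=\frac{x_1}{2}-\frac{b}{2a}$, and since $a,b>0$ we get $y_v<\frac{x_1}{2}\le x_1\le m$. Thus on $[m,1]$ the convex quadratic lies entirely to the right of its vertex and is strictly increasing, so its minimum is attained at $y=m$. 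Combining the two pieces, the overall minimizer for $a>0$ lies in $\{0,m\}$, and together with the symmetric case and $a=0$ we conclude $y^*_{MC}\in\{0,m,1\}=\{0,\frac{x_1+x_n}{2},1\}$.

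The only genuine subtlety is the convex right piece; everything else follows immediately from the concavity/convexity of the two quadratic pieces and the explicit vertex computation, the positivity of $a$ and $b$ being exactly what pushes the vertex to the left of $m$.
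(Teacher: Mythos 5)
Your proof is correct. It shares the paper's overall skeleton---reduce to $a>0$ via the reflection $y\mapsto 1-y$, split the domain at the midpoint $m=\frac{x_1+x_n}{2}$, and use concavity of $(ay+b)(x_n-y)$ on the left piece to push its minimizer to the endpoints $\{0,m\}$---but it treats the region to the right of $m$ by a genuinely different argument. The paper never confronts the convex quadratic at all: it rules out $(m,x_n]$ by a dominance argument (the mirror point $2m-y$ yields the same maximum unscaled distance $y-x_1$ but a strictly smaller scaling factor, since $q$ is increasing), and rules out $(x_n,1]$ by noting that moving the facility back to $x_n$ strictly lowers every agent's cost. You instead attack the convex piece $(ay+b)(y-x_1)$ on $[m,1]$ head-on, computing its vertex $y_v=\frac{x_1}{2}-\frac{b}{2a}$ and using $a,b>0$ (together with $x_1\geq 0$) to conclude $y_v<m$, so the piece is increasing on $[m,1]$ and its minimum sits at $m$. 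Both routes are sound. The paper's dominance argument is calculation-free and makes it transparent why no point right of the midpoint can ever win, though it needs two sub-cases ($(m,x_n]$ and $(x_n,1]$); your vertex computation handles all of $[m,1]$ in one uniform step and correctly identifies---then dispatches---the only place where convexity could in principle have produced an interior minimum, which is exactly the subtlety you flagged.
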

            \begin{proof}
                Suppose $q$ is of the form $q(y) = ay+b$, and recall that the maximum cost of the instance is $MC(q,y,\boldsymbol{x}) = \max_i c_i(q,y) = \max_i (ay+b)|y-x_i|$. Due to symmetry it suffices to consider $a>0$.

                Since $a>0$, we know that $y^*_{MC}\notin (x_n,1]$, as such a facility placement has a strictly higher cost for all agents than a facility on $x_n$. We also know that $y^*_{MC}\notin (\frac{x_1+x_n}{2},x_n]$. This is because for every facility placement on this interval, there is a `symmetric' facility placement $y\in [x_1,\frac{x_1+x_n}{2})$ where the maximum (unscaled) distance is the same, but the scaling function value is lower. We therefore know that $y^*_{MC}\in [0,\frac{x_1+x_n}{2}]$. When the facility satisfies $y\leq \frac{x_1+x_n}{2}$, then we also know that agent $n$ incurs the highest cost, as it has the highest unscaled distance from the facility.

                We now know that if $a>0$, the maximum cost is $MC(q,y,x)=c_n(q,y) = (ay+b)(x_n-y)$, which is a quadratic function of $y$ that opens downwards. Therefore the maximum cost is minimized when $y=0$ or $y=\frac{x_1+x_n}{2}$. By symmetry, if $a<0$, agent $1$ incurs the maximum cost of $MC(q,y,x)=c_1(q,y) = (ay+b)(y-x_1)$, which is minimized when $y=1$ or $y=\frac{x_1+x_n}{2}$. We therefore have $y^*_{MC}\in \{0,\frac{x_1+x_n}{2},1\}$. \qed        
            \end{proof}
            
            We can also similarly show that when computing $y^*_{MC}$ for continuous scaling functions, we only need to consider the optimal facility location with respect to the leftmost and rightmost agents.
            \begin{proposition}
                For continuous scaling functions, the optimal facility location for maximum cost satisfies $$y^*_{MC}\in \left\{\argmin_{y\in[0,\frac{x_1+x_n}{2}]}c_n(q,y), \argmin_{y\in[\frac{x_1+x_n}{2},1]}c_1(q,y)\right\}.$$
            \end{proposition}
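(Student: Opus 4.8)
The plan is to generalize the reasoning in the proof of Proposition~\ref{prop:optmaxlinear}, isolating precisely the part of that argument which does not depend on linearity. The crucial observation is that at any fixed facility location $y$, every agent's cost is scaled by the same factor $q(y)$, so the maximum cost can be written as $MC(q,y,\boldsymbol{x}) = q(y)\max_i |y-x_i|$, and the identity of the agent attaining the maximum depends only on the unscaled distances $|y-x_i|$, never on the value of $q$.

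First I would establish that the maximum unscaled distance is always realized by one of the two extreme agents, i.e.\ $\max_i |y-x_i| = \max(|y-x_1|,|y-x_n|)$ for every $y \in [0,1]$; this holds simply because $x_1$ and $x_n$ are the leftmost and rightmost agents. Next I would split the domain at the midpoint $\frac{x_1+x_n}{2}$: comparing $x_n - y$ with $y - x_1$ shows that for $y \leq \frac{x_1+x_n}{2}$ the rightmost agent $n$ is the (weakly) binding agent, so $MC(q,y,\boldsymbol{x}) = c_n(q,y)$, while for $y \geq \frac{x_1+x_n}{2}$ the leftmost agent $1$ binds, so $MC(q,y,\boldsymbol{x}) = c_1(q,y)$. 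Minimizing over the full interval therefore reduces to comparing the minimum of $c_n$ on $[0,\frac{x_1+x_n}{2}]$ with the minimum of $c_1$ on $[\frac{x_1+x_n}{2},1]$, which gives exactly the claimed two-element candidate set.

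The final step is to justify that each $\argmin$ is well-defined, and this is where continuity of $q$ enters: since $c_n(q,\cdot) = q(\cdot)(x_n - \cdot)$ and $c_1(q,\cdot) = q(\cdot)(\cdot - x_1)$ are continuous functions on the compact subintervals, the extreme value theorem guarantees that each minimum is attained. I expect no genuine obstacle here — the argument is entirely structural — but the point worth stressing is that, unlike the linear case of Proposition~\ref{prop:optmaxlinear}, we cannot further collapse each $\argmin$ to a finite set of candidate points, since for an arbitrary continuous $q$ the minimizer of $c_n$ on the left subinterval may occur at an interior point determined by the shape of $q$. Thus the reduction to the two extreme agents, valid for any common multiplicative scaling, is the most we can extract in general.
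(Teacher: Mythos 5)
Your proposal is correct and follows essentially the same route as the paper: both arguments observe that since all agents share the common scaling factor $q(y)$, the maximum-cost agent is determined by unscaled distances alone, so agent $n$ binds on $[0,\frac{x_1+x_n}{2}]$ and agent $1$ binds on $[\frac{x_1+x_n}{2},1]$, reducing the global minimization to the two stated subproblems. Your additional appeal to the extreme value theorem to justify that each $\argmin$ is attained is a reasonable bit of extra care (and connects nicely to the paper's Proposition~\ref{prop:discon} on discontinuous $q$), but it does not change the substance of the argument.
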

            \begin{proof}
                If $y\in[0,\frac{x_1+x_n}{2}]$, then agent $n$ receives the highest cost in the instance, similar to the proof of Proposition~\ref{prop:optmaxlinear}. A symmetric argument shows that if $y\in[\frac{x_1+x_n}{2},1]$, then agent $1$ receives the highest cost in the instance. Therefore $y^*_{MC}$ is either on the minimum value of $c_n(q,y)$ on $y\in[0,\frac{x_1+x_n}{2}]$, or the minimum value of $c_1(q,y)$ on $y\in[\frac{x_1+x_n}{2},1]$. \qed 
            \end{proof}
            
		We next investigate whether there exist non-trivial strategyproof mechanisms for this setting.
		\section{Achieving Single-Peaked Preferences}
		When agents have single-peaked preferences along a compact domain, \citet{Moul80} characterized the set of anonymous and strategyproof mechanisms as \emph{phantom mechanisms}, which place the facility at the median of the $n$ agent locations and $n+1$ constant `phantom' points.
		\begin{definition}[Phantom Mechanism]
			Given \textbf{x} and $n+1$ constant values $0\leq p_1\leq p_2 \leq \cdots \leq p_{n+1} \leq 1$, a \emph{phantom mechanism} places the facility at  $\text{med}\{x_1,\cdots,x_n,p_1,\cdots,p_{n+1}\}.$
		\end{definition}
		However, we find that in our setting, phantom mechanisms are not necessarily strategyproof. Even the well-studied median mechanism, a phantom mechanism with $\lceil \frac{n+1}{2} \rceil$ phantoms at $0$ and $\lfloor \frac{n+1}{2} \rfloor$ phantoms at $1$, is not strategyproof.
		 \begin{proposition}
		     The median mechanism is not strategyproof.
		 \end{proposition}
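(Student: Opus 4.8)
The plan is to prove the statement by exhibiting a single counterexample: a scaling function $q$, a location profile $\boldsymbol{x}$, and a profitable misreport for one agent. First I would observe that for the specified phantom placement ($\lceil\frac{n+1}{2}\rceil$ phantoms at $0$ and $\lfloor\frac{n+1}{2}\rfloor$ at $1$), the median mechanism simply returns the median of the reported agent locations; for $n=3$ this is the middle agent $x_2$. The conceptual reason the mechanism can fail is that strategyproofness of the median rule in the classic setting relies on each agent's preference being single-peaked with peak exactly at $x_i$. Under scaling, agent $i$'s cost $c_i(q,y)=q(y)|y-x_i|$ need not be minimized at $x_i$, so an agent may prefer a facility location different from the one truthtelling induces, and can try to drag the median toward it.

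I would then make the deviation structure explicit for $n=3$. The middle agent obtains cost $0$ under truthtelling (the facility sits on $x_2$) and thus has no profitable deviation, so the counterexample must use an outer agent. By reporting a value below $x_2$, the rightmost agent can relocate the facility to any point of $[x_1,x_2]$; symmetrically, the leftmost agent can move it anywhere in $[x_2,x_3]$. Hence it suffices to choose $q$ so that an outer agent's scaled cost is strictly smaller at some reachable point than at $x_2$. Taking an increasing scaling function makes the rightmost agent prefer the facility to the left, where $q$ is smaller, which is exactly the direction it is able to force. A concrete candidate is $q(y)=y+\tfrac{1}{10}$ with profile $\boldsymbol{x}=(\tfrac{1}{10},\tfrac{1}{5},\tfrac{9}{10})$: truthtelling yields facility $\tfrac{1}{5}$ and cost $q(\tfrac{1}{5})\cdot\tfrac{7}{10}=0.21$ for agent $3$, whereas reporting $x_3'=\tfrac{1}{10}$ relocates the median to $\tfrac{1}{10}$ and gives true cost $q(\tfrac{1}{10})\cdot\tfrac{8}{10}=0.16<0.21$.

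The remaining work is routine verification: recomputing the median before and after the deviation and comparing the two scaled costs, together with checking that $q$ is continuous and strictly positive on $[0,1]$. The only real design obstacle is aligning two constraints simultaneously — an outer agent can only push the median toward the center, so the scaling function must be chosen (here, increasing) so that this forced direction coincides with the direction in which the agent's scaled cost decreases, and it must be steep enough that the drop in $q$ outweighs the increase in distance. Matching the monotonicity of $q$ to the chosen agent, as above, resolves this, and a fully symmetric construction with a decreasing $q$ and the leftmost agent works equally well.
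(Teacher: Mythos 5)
Your proof is correct and takes essentially the same approach as the paper: both exhibit a concrete three-agent counterexample in which the rightmost agent profitably misreports toward the center, dragging the median facility to a point where the scaling function value is low enough that the drop in $q$ outweighs the increased distance. The only difference is the choice of instance --- the paper uses a V-shaped piecewise linear $q$ with the misreport landing on its minimum, whereas you use a monotone linear $q(y)=y+\tfrac{1}{10}$ with profile $(\tfrac{1}{10},\tfrac{1}{5},\tfrac{9}{10})$; your arithmetic checks out ($0.16<0.21$), so the counterexample is valid.
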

		\begin{proof}
			Consider the agent location profile $\boldsymbol{x}=(0.5,0.7,0.9)$ and the scaling function
			$q(y)=\begin{cases}
				6-10y, & y\leq 0.5,\\
				10y-4, & y>0.5.
			\end{cases}
			$
   
			The median mechanism places the facility at $y=0.7$, and it leads to a cost of $0.6$ for agent $x_3$. However if the agent at $x_3$ misreports to $x_3' = 0.5$, then the median facility placement will be $0.5$, which leads to a lower cost of $0.4$ for the agent at $x_3$. Hence the median mechanism is not strategyproof. \qed
		\end{proof}
        In fact, if there are at least two unique phantom locations, the phantom mechanism is not strategyproof.

        \begin{theorem}\label{thm:phantomnotsp}
			Every phantom mechanism with $n+1$ phantoms on at least two unique phantom locations is not strategyproof.
		\end{theorem}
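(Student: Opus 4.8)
The plan is to exhibit, for an arbitrary phantom mechanism whose phantoms $p_1\le\cdots\le p_{n+1}$ occupy at least two distinct locations, a scaling function $q$, an agent location profile, and a single agent who strictly benefits from misreporting. The starting observation is that the output $\med\{x_1,\dots,x_n,p_1,\dots,p_{n+1}\}$ does not depend on $q$ at all, so I would first fix the agents' reports to control the facility location purely combinatorially, and only afterwards choose $q$ to make a deviation profitable. A second fact I would use is that, once the reports $\mathbf{x}_{-j}$ of all agents other than some agent $j$ are fixed, the output is the clamp $\med\{x_j,\alpha,\beta\}$ of agent $j$'s own report to a feasible interval $[\alpha,\beta]$, where $\alpha=f(q,0,\mathbf{x}_{-j})$ and $\beta=f(q,1,\mathbf{x}_{-j})$; this is just the monotonicity of the generalized median. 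Thus agent $j$ can realize any facility location in $[\alpha,\beta]$ and nothing outside it.

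The heart of the argument is to choose $\mathbf{x}_{-j}$ so that this interval is nondegenerate ($\alpha<\beta$) and is not all of $[0,1]$, i.e.\ $\alpha>0$ or $\beta<1$. I would split into two cases according to the phantoms. If some phantom lies strictly in $(0,1)$, I take a jump $p_m<p_{m+1}$ of the phantom vector having that interior value as an endpoint, place $n-m$ of the other agents at $0$ and $m-1$ of them at $1$, and verify by counting order statistics that the feasible interval is exactly $[p_m,p_{m+1}]$, which has an interior endpoint and hence is not $[0,1]$. If instead every phantom sits at $0$ or $1$ (the median-like mechanisms), the output reduces to the $b$-th smallest agent location, where $b$ is the number of phantoms at $1$; here I would instead place the other agents at two interior points $0<\ell<h<1$, using $b-1$ agents at $\ell$ and $n-b$ at $h$, so that agent $j$'s feasible interval becomes $[\ell,h]$ (or $[\ell,1]$, $[0,h]$ in the extreme counts $b=n$, $b=1$), which again is nondegenerate and strictly inside $[0,1]$.

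With such an interval $[\alpha,\beta]$ in hand, I would finish by placing agent $j$'s true peak at whichever endpoint of $[0,1]$ lies outside the interval and installing a deep valley in $q$ at the opposite endpoint of $[\alpha,\beta]$. Concretely, if $\alpha>0$ I set $x_j=0$, so truthful reporting yields the facility at $\alpha$ with cost $q(\alpha)\alpha>0$, and I choose a continuous, strictly positive $q$ (for instance piecewise linear) with $q(\beta)$ arbitrarily small; reporting $x_j'=1$ then moves the facility to $\beta$ at cost $q(\beta)\beta$, which is strictly smaller than $q(\alpha)\alpha$ once $q(\beta)$ is small enough, so the deviation is profitable. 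The symmetric construction with $x_j=1$ and a valley at $\alpha$ handles the case $\beta<1$. Because $q$ does not affect the mechanism's output, the valley only changes costs and leaves the combinatorially computed facility locations intact. The main obstacle I anticipate is the second step: guaranteeing a nondegenerate feasible interval that avoids the full domain for \emph{every} admissible phantom vector, which is exactly where the degenerate ``all phantoms at $0$ and $1$'' mechanisms (including the median) force the separate interior-placement construction and a careful order-statistic count.
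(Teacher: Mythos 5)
Your proposal is correct, and it reaches the conclusion by a genuinely different route than the paper. The paper anchors everything at the two smallest distinct phantom locations $a_1<a_2$: it fixes \emph{up front} an explicit piecewise linear scaling function with a valley at $\frac{a_1+a_2}{2}$, places all agents inside $[a_1,a_2]$, and cases on the rank $n+1-n^{*}$ of the order statistic the mechanism selects (where $n^{*}$ is the number of phantoms at $a_1$); in each case the manipulator reports the valley point itself, and the profitability of the deviation is verified by concrete computations with that hand-crafted $q$. You instead isolate the option-set (clamping) property of generalized medians --- agent $j$ can realize exactly the interval $[\alpha,\beta]$ given by the two middle order statistics of the $2n$ fixed values --- and reduce the whole problem to engineering a nondegenerate proper subinterval, casing on whether some phantom is interior (option interval $[p_m,p_{m+1}]$ for a jump with an interior endpoint) or all phantoms sit at $\{0,1\}$ (option interval $[\ell,h]$, $[0,h]$, or $[\ell,1]$ depending on the count of phantoms at $1$); only then do you choose $q$, with a deep valley at the far endpoint of the option interval, and the deviation is an ``overshoot'' report of a domain endpoint. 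I checked your order-statistic counts in all cases and they are right. What your route buys: deferring the choice of $q$ makes the cost comparison trivial (one value of $q$ merely has to be small enough), and the clamping lemma makes transparent why two distinct phantom locations are exactly the obstruction; the paper's route is more elementary and self-contained, at the price of verifying inequalities for a specific $q$. One caveat you share with the paper: when $n=1$ and the two phantoms are exactly $\{0,1\}$, the mechanism is $f(x_1)=\med\{x_1,0,1\}=x_1$, which \emph{is} strategyproof, so the theorem implicitly assumes $n\geq 2$; your boundary case needs at least one non-deviating agent to place at $\ell$ or $h$, just as the paper's Case 2 needs two distinct agent locations.
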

		
		\begin{proof}
   Suppose the $n+1$ phantoms are located on $k\in \{2,\dots,n+1\}$ unique points $(a_1,a_2,\cdots,a_k)$. We consider an agent profile such that all the agents are in the interval $[a_1,a_2]$. Denoting the number of phantoms on $a_1$ as $n^{*}$, the facility is placed on the $(n+1-n^{*})$th agent under this profile. 
   Now suppose the scaling function is
   $$q(y)= \begin{cases}
            3-\frac{a_1+1}{a_2+1}-\frac{4}{a_1+a_2}y, & y \in [0,\frac{a_1+a_2}{2}],\\
                      \frac{8}{a_2+1}y-\frac{5a_1+3a_2}{a_2+1}, & y \in [\frac{a_1+a_2}{2},1].
   \end{cases}
   $$
   
   \textbf{Case 1} ($n+1-n^{*} \in \{2,\dots,n\}$):
   With all agents located in $[a_1,a_2]$, we also specifically let $x_{n-n^{*}} = a_1,x_{n+1-n^{*}} = \frac{3a_1+a_2}{4}$, and $x_{n+2-n^{*}} = \frac{a_1+a_2}{2}$.
   In such case, the mechanism places the facility at $x_{n+1-n^{*}}$, and the agent at $x_{n-n^{*}}$ incurs cost 
            \begin{align*}
            c_{(n-n^{*})}(q,f(\boldsymbol{x})) &= q(\frac{3a_1+a_2}{4})\cdot|\frac{3a_1+a_2}{4}-a_1|\\
            &= (3-\frac{a_1+1}{a_2+1}-\frac{4}{a_1+a_2}\cdot\frac{3a_1+a_2}{4})(\frac{a_2-a_1}{4})\\
            &= (1-\frac{1}{2}(\frac{a_1+1}{a_2+1}+\frac{2a_1}{a_1+a_2}))(\frac{a_2-a_1}{2})\\
            &>(1-\frac{a_1+1}{a_2+1})(\frac{a_2-a_1}{2}).
             \end{align*}
            The last inequality is due to $\frac{a_1+a_1}{a_2+a_1} < \frac{a_1+1}{a_2+1}$. 
            If $x_{n-n^{*}}$ misreports to $x_{n-n^{*}}'=\frac{a_1+a_2}{2}$, then $f(\boldsymbol{x}') = \frac{a_1+a_2}{2}$, and their new cost becomes $c_{(n-n^{*})}(q,f(\boldsymbol{x}'))  = (1-\frac{a_1+1}{a_2+1})(\frac{a_2-a_1}{2})<c_{(n-n^{*})}(q,f(\boldsymbol{x}))$.
            
    \textbf{Case 2} ($n+1-n^{*} = 1$):
   Consider the agent location profile $\boldsymbol{x}$ where $x_1 = \frac{a_1+3a_2}{4}$ and $x_n = a_2$. Here, we have $f(\boldsymbol{x}) = x_{n+1-n^{*}} = x_1 = \frac{a_1+3a_2}{4}$. Therefore the cost for agent $x_n$ is 
   \begin{align*}
       c_n(q,f(\boldsymbol{x})) &= q(\frac{a_1+3a_2}{4})\cdot|\frac{a_1+3a_2}{4}-a_2|\\
       &=  (\frac{8}{a_2+1}\cdot\frac{a_1+3a_2}{4}-\frac{5a_1+3a_2}{a_2+1})|\frac{a_1+3a_2}{4}-1|\\
       &= (\frac{3a_2-3a_1}{a_2+1})(\frac{a_2-a_1}{4})\\
       &= \frac{3}{2}\cdot(1-\frac{a_1+1}{a_2+1})(\frac{a_2-a_1}{2})\\
       &> (1-\frac{a_1+1}{a_2+1})(\frac{a_2-a_1}{2}).
   \end{align*}    
   However if $x_n$ misreports to $x_n'=\frac{a_1+a_2}{2}$, then $f(\boldsymbol{x}') = \frac{a_1+a_2}{2}$. Agent $x_n$ will then incur cost \begin{align*}
       c_n(q,f(\boldsymbol{x}'))&=q(\frac{a_1+a_2}{2})\cdot|\frac{a_1+a_2}{2}-a_2|\\
       &= (\frac{8}{a_2+1}\cdot\frac{a_1+a_2}{2}-\frac{5a_1+3a_2}{a_2+1})(\frac{a_2-a_1}{2})\\
       &= (1-\frac{a_1+1}{a_2+1})(\frac{a_2-a_1}{2})\\
       &<c_n(q,f(\boldsymbol{x})).
   \end{align*}
   Therefore, a phantom mechanism which does not place all phantoms at the same position is not strategyproof. \qed
  \end{proof}

    From this theorem, we arrive at the following corollary. Note that the constant mechanism is equivalent to the phantom mechanism with all $n+1$ phantoms at the same location.
        \begin{corollary}
        The constant mechanism (which places the facility at some constant $c\in X$) is the only anonymous and strategyproof phantom mechanism in our setting.
        \end{corollary}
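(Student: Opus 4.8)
The plan is to reduce the corollary directly to Theorem~\ref{thm:phantomnotsp}. A phantom mechanism is determined by its $n+1$ phantom values, and its output is the median of these together with the $n$ agent locations. By Theorem~\ref{thm:phantomnotsp}, any phantom mechanism whose phantoms occupy at least two distinct locations fails to be strategyproof. Hence the only phantom mechanisms that could possibly be strategyproof are those in which all $n+1$ phantoms coincide at a single point $c\in X$; the remaining work is to show that these are precisely the constant mechanisms, and that they are indeed anonymous and strategyproof.

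First I would verify that a phantom mechanism placing all $n+1$ phantoms at $c$ always outputs $c$. Its output is the median, i.e.\ the $(n+1)$th smallest, of the $2n+1$ values consisting of the $n$ agent locations together with $n+1$ copies of $c$. At most $n$ of these values lie strictly below $c$ and at most $n$ lie strictly above it, so the $(n+1)$th order statistic must be one of the copies of $c$; that is, the output equals $c$ for every agent profile. Thus this mechanism is exactly the constant mechanism with output $c$.

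It then remains to confirm that the constant mechanism satisfies both desired properties. Anonymity is immediate, since the output $c$ does not depend on the agents' locations at all, and in particular is invariant under any relabelling. Strategyproofness is equally direct: for any agent $i$, the facility location is $c$ irrespective of $i$'s reported location, so the cost $c_i(q,f(x_i',\mathbf{x}_{-i}))=q(c)\,|c-x_i|$ is the same for every report $x_i'$, and no misreport can reduce agent $i$'s cost. Combining these observations with the first paragraph establishes that the constant mechanisms are exactly the anonymous and strategyproof phantom mechanisms.

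I expect no genuine obstacle here, as Theorem~\ref{thm:phantomnotsp} carries the entire difficulty; the only point requiring care is the order-statistic counting argument confirming that a single-location phantom mechanism collapses to a constant mechanism, and the trivial but necessary verification that the constant mechanism does meet both properties.
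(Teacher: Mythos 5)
Your proposal is correct and follows the same route as the paper, which derives the corollary directly from Theorem~\ref{thm:phantomnotsp} together with the remark that a phantom mechanism with all $n+1$ phantoms at one point $c$ coincides with the constant mechanism. The paper leaves the order-statistic collapse and the anonymity/strategyproofness of the constant mechanism implicit; your explicit verification of these easy steps is accurate and adds nothing beyond what the paper intends.
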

  
		This is because the scaling function may cause the agent's preference to no longer be single-peaked. To show this, we first define each agent's utility as the negative of their cost ($u_i(q,y) := -c_i(q,y)$), so that their preference (with respect to cost) is single-peaked if and only if their utility is single-peaked. We therefore use the terms `utility functions' and `preferences' interchangeably throughout the paper.
		
		\begin{theorem}
			An agent's preference may not be single-peaked, even if the scaling function is a linear function.	
		\end{theorem}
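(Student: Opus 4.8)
The plan is to exhibit a single linear scaling function together with one agent location for which the induced preference fails to be single-peaked; since the statement only claims that this \emph{may} happen, one clean counterexample suffices. I would fix the form $q(y)=ay+b$ with $a>0$ and study the agent's cost $c_i(q,y)=(ay+b)|y-x_i|$ on each side of the agent. Because $c_i(q,x_i)=0$ is the global minimum, the only candidate peak of the utility $u_i=-c_i$ is at $x_i$, so single-peakedness is equivalent to $c_i$ being strictly decreasing on $[0,x_i]$ and strictly increasing on $[x_i,1]$. The strategy is to show that the first of these two monotonicity requirements can be broken.

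Next I would compute $c_i$ piecewise. For $y\ge x_i$ we have $c_i=(ay+b)(y-x_i)$, an upward-opening parabola whose vertex sits at $y=x_i/2-b/(2a)<x_i$; hence $c_i$ is strictly increasing on $[x_i,1]$ and this side causes no trouble. The informative side is $y\le x_i$, where $c_i=(ay+b)(x_i-y)$ is a \emph{downward}-opening parabola with vertex (a local \emph{maximum} of cost) also at $y^\ast=x_i/2-b/(2a)$.

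The crux is the position of $y^\ast$: it always satisfies $y^\ast<x_i$, and when $a$ is large enough that $ax_i>b$ it also satisfies $y^\ast>0$, so $y^\ast$ lands strictly inside $(0,x_i)$. In that case the cost on $[0,x_i]$ first \emph{rises} from $c_i(q,0)=bx_i$ to the interior maximum at $y^\ast$ and only then falls to $0$ at $x_i$; equivalently, the utility has a spurious second local peak at the left endpoint $y=0$ in addition to its global peak at $x_i$, which is exactly a violation of single-peakedness. To finish I would instantiate with $q(y)=y+0.1$ (so $a=1$, $b=0.1$, positive on $[0,1]$) and an agent at $x_i=1$, giving $y^\ast=0.45$ and the cost values $c_i(q,0)=0.1$, $c_i(q,0.45)=0.3025$, and $c_i(q,1)=0$; then I would check against the definition that no peak location works — a peak at $1$ fails since $u_i$ is not increasing near $0$, and any interior or left peak fails since $u_i$ attains its maximum at $1$.

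I expect the only genuine obstacle to be conceptual rather than computational: one must recognize that the non-single-peaked behaviour is produced on the downward-parabola side of the agent (the side toward which $q$ decreases) and correctly identify the threshold $ax_i>b$ that pushes the vertex $y^\ast$ into the interior. Once the right example is chosen, verifying the definition is routine, so I would keep the instance as simple as possible — placing the agent at the endpoint $x_i=1$ — to make the two competing peaks maximally transparent.
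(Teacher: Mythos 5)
Your proposal is correct and takes essentially the same approach as the paper: both exhibit an increasing linear scaling function with small intercept (the paper uses $q(y)=y+0.01$ with an agent at $0.5$, you use $q(y)=y+0.1$ with an agent at $1$) and show that on the side of the agent toward which $q$ decreases, the cost is a downward-opening parabola with an interior vertex, so the utility dips and rises rather than being single-peaked. Your extra derivation of the general threshold $ax_i>b$ is a nice refinement but does not change the underlying argument.
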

		\begin{proof}
			Consider the scaling function $q(y) = y+0.01$ and $x_1=0.5$. The derivative of agent $1$'s utility is $u_1'(q,y)=2y-0.49$ when $y < 0.5$, and $u_1'(q,y)=-(2y-0.49)$ when $y > 0.5$. For $y\in[0,0.245]$, $u_1'(q,y) \leq 0$. For $y\in [0.245,0.5]$, $u' \geq 0$. Lastly, for $y\in [0.5,1]$, $u_1'(q,y) < 0$. Therefore $u_1(q,y)$ is not single-peaked. \qed
		\end{proof}

However, we can characterize the conditions on the scaling functions which ensure that agents have single-peaked preferences. These conditions hold as long as $q$ is continuous, even if it is not differentiable at a countable set of points $D$.
   \begin{theorem}\label{thm:charcontnotdiff}
                    If $q$ is continuous but not differentiable at a countable set of points $D:=\{j|q'(j)\, \text{does not exist}\}$, the agents' preferences are guaranteed to be single-peaked if and only if $q(y)-|q'(y)| \geq 0$ for all $y\in [0,1]\backslash D$.
                \end{theorem}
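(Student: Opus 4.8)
The plan is to translate single-peakedness into a monotonicity statement about each agent's cost and then verify it branch-by-branch. Since $q>0$ on $[0,1]$, the cost $c_i(q,y)=q(y)|y-x_i|$ is zero exactly at $y=x_i$ and strictly positive elsewhere, so the utility $u_i(q,\cdot)=-c_i(q,\cdot)$ always attains its unique maximiser at $x_i$. Thus an agent at $x_i$ has single-peaked preferences if and only if $c_i(q,\cdot)$ is non-increasing on $[0,x_i]$ and non-decreasing on $[x_i,1]$, and preferences are \emph{guaranteed} single-peaked precisely when this holds for every $x_i\in[0,1]$. Away from the countable set $D$ the map $y\mapsto|y-x_i|$ is smooth on each side of $x_i$, so I would compute
\[
\tfrac{d}{dy}\,c_i(q,y)=q'(y)(x_i-y)-q(y)\ \ (y<x_i),\qquad \tfrac{d}{dy}\,c_i(q,y)=q'(y)(y-x_i)+q(y)\ \ (y>x_i),
\]
which reduces everything to controlling the sign of these two expressions for all admissible $x_i$.

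For the ``if'' direction I would assume $q(y)\ge|q'(y)|$ for all $y\in[0,1]\setminus D$ and exploit that the domain has length one, so $|y-x_i|\le 1$. On the left branch $q'(y)(x_i-y)\le|q'(y)|\,|x_i-y|\le|q'(y)|\le q(y)$, making the derivative $\le 0$; symmetrically $q'(y)(y-x_i)\ge-|q'(y)|\ge-q(y)$ on the right branch, making the derivative $\ge 0$. To pass from ``correct derivative sign off $D$'' to genuine monotonicity I would invoke the standard real-analysis fact that a continuous function whose derivative exists and keeps a fixed sign outside a countable set is monotone; since $c_i(q,\cdot)$ is continuous and $D$ is countable, this yields that $c_i$ is single-dipped at $x_i$ for every $x_i$, i.e.\ preferences are guaranteed single-peaked.

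For the ``only if'' direction I would argue by contraposition: suppose $q(y_0)<|q'(y_0)|$ at some $y_0\notin D$, and, by the symmetry between increasing and decreasing $q$, assume $q'(y_0)>q(y_0)\ge 0$. I would then place an agent at a location $x_i$ to the right of $y_0$ with $x_i-y_0$ slightly larger than $q(y_0)/q'(y_0)$, so that the left-branch derivative $q'(y_0)(x_i-y_0)-q(y_0)$ is strictly positive while $c_i(q,x_i)=0$ remains the global dip; the cost therefore rises just after $y_0$ and later falls to $0$ at $x_i$, exhibiting a strict interior local maximum and violating single-dippedness, exactly in the spirit of the linear example preceding the theorem. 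The hard part will be this necessity step: the counterexample is legitimate only if the required $x_i=y_0+q(y_0)/q'(y_0)+\varepsilon$ lies in $[0,1]$, so I must argue that a violating point with enough room to its right (or to its left, in the decreasing case) always exists — this is where the bound $|y-x_i|\le 1$ and the position of $y_0$ in the interval genuinely enter, and it is the step most in need of care. A secondary technical point is justifying the monotonicity conclusion across the countably many non-differentiable points rather than silently assuming piecewise smoothness of $q$.
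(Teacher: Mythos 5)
Your ``if'' direction is correct and matches the paper's: the two one-sided derivatives are exactly the expressions the paper computes (up to the cost/utility sign flip), and the bound $|y-x_i|\le 1$ enters the same way. You are in fact more careful than the paper on one point: the paper passes from the sign of $u_i'$ off $D$ to monotonicity of $u_i$ without comment, whereas you explicitly invoke the fact that a continuous function whose derivative has a fixed sign outside a countable set is monotone. That fact is true and is precisely what is needed to handle the exceptional set $D$, so your sufficiency half is complete.

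The gap is in the ``only if'' direction, and it is exactly the one you flagged --- but it is worse than a step needing care, because it cannot be closed. Your witness agent must sit at $x_i=y_0+q(y_0)/q'(y_0)+\varepsilon$, and when $q'(y_0)>0$ but $q(y_0)/q'(y_0)>1-y_0$ no such agent exists in $[0,1]$; in that case single-peakedness genuinely survives. What ``guaranteed single-peaked'' forces at a fixed $y$ is $q(y)+q'(y)(y-x_i)\ge 0$ for $x_i\in[0,1]$ only, i.e.\ for $y-x_i\in[y-1,\,y]$, so the tight condition is $q(y)\ge q'(y)(1-y)$ where $q'(y)>0$ and $q(y)\ge -q'(y)\,y$ where $q'(y)<0$; this agrees with $q(y)\ge |q'(y)|$ only at the relevant endpoint of the domain (and, after minimizing over $y$, for globally linear $q$, which is why the linear corollary is unaffected). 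A concrete counterexample to necessity: let $q(y)=1$ on $[0,0.8]$ and $q(y)=2y-0.6$ on $[0.8,1]$. Then $q(y)-|q'(y)|=2y-2.6<0$ throughout $(0.8,1)$, yet every agent $x_i\in[0,1]$ has single-peaked preferences: for $y\in(0.8,x_i)$ the cost derivative is $2x_i-4y+0.6<2x_i-2.6\le -0.6<0$, and the remaining branches are immediate. (This example also contradicts Proposition~\ref{prop:piece}, which inherits the same issue.) The paper's own proof makes precisely the quantifier slip you declined to make: from ``$x_i\in[0,1]$ implies $y-x_i\in[-1,1]$'' it concludes $q(y)-|q'(y)|\ge 0$, tacitly assuming that for every $y$ there is an admissible $x_i$ with $|y-x_i|=1$ of the unfavourable sign, which holds only for $y\in\{0,1\}$. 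So your sufficiency proof stands, your necessity strategy is the natural one, and the reason you could not complete it is that the stated equivalence is false as written --- only the weaker, position-dependent condition above is actually necessary.
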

                \begin{proof}
                   Recall that the utility function for agent $x_i$ is
			$$u_i(q,y) = \begin{cases}
				q(y)(y-x_i), & y \leq x_i,\\
				q(y)(x_i-y), & y \geq x_i.
			\end{cases}$$
                The derivative of $u$ with respect to $y$ is
			$$
			u_i'(q,y)=\begin{cases}
                    q(y)+q'(y)(y-x_i), & y \in [0,x_i)\backslash D, \\
                    \text{Not differentiable}, & y\in D\cup\{x_i\},\\
				-(q(y)+q'(y)(y-x_i)), & y \in (x_i,1]\backslash D.
			\end{cases}
			$$
                $(\implies)$ The utility function is single-peaked, so for each agent $i$, there exists a value $k_i \in (0,1)$ such that $u_i'(q,y) \geq 0$ with $y \in[0,k_i)$, and $u_i'(q,y) \leq 0$ with $y \in(k_i,1]$. Since agents receive zero cost when the facility is placed at their location, we know the utility function's `peak' lies on their location, and therefore $k_i=x_i$.
                
               As a result, we have $q(y)+q'(y)(y-x_i) \geq 0$ for $y \in [0,x_i)\backslash D$, and $-q(y)+q'(y)(y-x_i) \leq 0$ for $y\in (x_i,1]\backslash D$. This can also be written as $q(y)+q'(y)(y-x_i) \geq 0$ for all $y\in [0,1]\backslash (\{x_i\}\cup D)$. Having $x_i \in [0,1]$ implies that $(y-x_i)\in[-1,1]$, meaning that our condition is $q(y)-|q'(y)| \geq 0$ for all $y\in[0,1]\backslash D$.

                $(\impliedby)$ Suppose that a scaling function $q$ satisfies $q(y)-|q'(y)| \geq 0$ for all $y\in[0,1]\backslash D$. We know that for every agent $x_i$, $q(y)+q'(y)(y-x_i) \geq 0$ for all $y\in [0,1]\backslash (\{x_i\}\cup D)$. The utility function is increasing for $y<x_i$, and decreasing for $y>x_i$. \qed
                \end{proof}
   By using this theorem, we can characterize the conditions on linear scaling functions which result in agents having single-peaked preferences.
                \begin{corollary}
                    If $q$ is a linear function of the form $q(y) = ay +b$, the agents' preferences are guaranteed to be single-peaked if and only if $b\geq a$ for $a > 0$, and $b \geq -2a$ for $a<0$.\\
                \end{corollary}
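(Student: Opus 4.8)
The plan is to apply Theorem~\ref{thm:charcontnotdiff} directly, since a linear function is continuous and differentiable everywhere. First I would observe that for $q(y)=ay+b$ we have $q'(y)=a$ at every point, so the set of non-differentiable points $D$ is empty and $|q'(y)|=|a|$ is constant. The single-peakedness criterion from Theorem~\ref{thm:charcontnotdiff}, namely $q(y)-|q'(y)|\geq 0$ for all $y\in[0,1]\backslash D$, therefore reduces to requiring that the affine function $g(y):=ay+b-|a|$ be nonnegative on the whole interval $[0,1]$.

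Next I would exploit monotonicity: since $g$ is affine in $y$, it attains its minimum over $[0,1]$ at one of the two endpoints, and which endpoint this is depends only on the sign of $a$. For $a>0$ the function $g$ is increasing, so its minimum is $g(0)=b-a$, and nonnegativity on $[0,1]$ is equivalent to $b-a\geq 0$, i.e.\ $b\geq a$. For $a<0$ the function $g$ is decreasing, so its minimum is $g(1)=a+b-(-a)=b+2a$, and nonnegativity is equivalent to $b+2a\geq 0$, i.e.\ $b\geq -2a$. The degenerate case $a=0$ yields a constant scaling function, for which preferences are trivially single-peaked. Matching these cases to the criterion gives exactly the claimed conditions.

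There is no substantive obstacle here; the statement is essentially a direct specialization of Theorem~\ref{thm:charcontnotdiff} to affine $q$. The only point requiring a little care is to track which endpoint minimizes $g$ in each sign regime, and to correctly substitute $|a|=a$ when $a>0$ and $|a|=-a$ when $a<0$, so that the two boundary conditions come out as $b\geq a$ and $b\geq -2a$ respectively.
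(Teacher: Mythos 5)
Your proof is correct and takes essentially the same route the paper intends: the corollary is presented as a direct specialization of Theorem~\ref{thm:charcontnotdiff}, and the paper's proof of the analogous piecewise-linear result (Proposition~\ref{prop:piece}) performs exactly your sign-split on $a$ followed by evaluating the affine condition at the worst endpoint of $[0,1]$. Your endpoint computations ($g(0)=b-a\geq 0$ for $a>0$, $g(1)=b+2a\geq 0$ for $a<0$) match the paper's conditions, and handling $a=0$ as a trivial constant case is a harmless addition.
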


                We also observe that a sufficiently large constant $c$ can be added to any continuous scaling function to ensure that every agent has a single-peaked preference. This can be explained by the scaling function becoming `closer' to a constant scaling function.

                We can also apply Theorem~\ref{thm:charcontnotdiff} to piecewise linear scaling functions to characterize the conditions for agents to have single-peaked preferences.
                \begin{proposition} \label{prop:piece}
                    If $q$ is a continuous piecewise linear function where each line segment is of the form $q_j(y) = a_jy+b_j$, the agents' preferences are guaranteed to be single-peaked if and only if for each line segment, $b_j \geq -2a_j$ when $a_j<0$, and $b_j \geq a_j$ when $a_j>0$.
                \end{proposition}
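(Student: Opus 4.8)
The plan is to reduce the statement to the single-peakedness test already established for a single linear function, and then apply it to each piece of $q$ in turn. First I would invoke Theorem~\ref{thm:charcontnotdiff}: since $q$ is continuous and piecewise linear, it is differentiable everywhere except at the finitely many breakpoints where adjacent segments meet, so the non-differentiability set $D$ is exactly this finite set. On the interior of the $j$-th segment the derivative is the constant slope $q'(y)=a_j$, so the theorem's global condition $q(y)-|q'(y)|\ge 0$ restricts, on that segment, to the affine inequality $a_j y + b_j - |a_j| \ge 0$. The whole problem therefore decouples across segments: the agents' preferences are single-peaked if and only if this inequality holds on every segment, because $q(y)-|q'(y)|\ge 0$ holds on $[0,1]\setminus D$ exactly when it holds on the interior of each piece.

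Next I would analyse the affine function $g_j(y) := a_j y + b_j - |a_j|$ on each piece. Being affine, $g_j$ is monotonic, so its sign on the segment is controlled by its value at an endpoint, with the sign of $a_j$ dictating which endpoint is binding. This is precisely the computation carried out in the corollary for a single linear scaling function following Theorem~\ref{thm:charcontnotdiff}: for an increasing piece ($a_j>0$) the function $g_j$ is increasing and the binding constraint yields the lower bound $b_j \ge a_j$; for a decreasing piece ($a_j<0$) it is decreasing and the binding constraint yields $b_j \ge -2a_j$; and the flat case $a_j = 0$ imposes no constraint since $q$ is strictly positive. Collecting these per-segment inequalities over all $j$ gives the claimed characterization.

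The step I expect to be the main obstacle is justifying the decoupling rigorously, i.e. that verifying the linear single-peakedness condition segment by segment is both necessary and sufficient for the global condition. Two points need care here. First, because $D$ is only a finite set, the theorem imposes its inequality on the open interiors of the pieces and not at the breakpoints; I would use the continuity of $q$ to argue that controlling $g_j$ on the interior is equivalent to controlling it on the closed sub-domain, so that no extra constraint is hidden at a breakpoint beyond those coming from the two adjacent segments, whose one-sided slopes are exactly $a_j$ and $a_{j+1}$. Second, and more delicately, one must confirm that the extremal point of $g_j$ governing single-peakedness lines up with the domain endpoint used in the linear corollary, so that the binding inequality really collapses to $b_j \ge a_j$ (resp. $b_j \ge -2a_j$) rather than an endpoint-dependent bound; pinning this down is the crux, after which the remaining work is the routine endpoint evaluations already performed for the linear case.
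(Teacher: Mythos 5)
Your approach is the same as the paper's: invoke Theorem~\ref{thm:charcontnotdiff}, note that $D$ is the finite set of breakpoints, and reduce single-peakedness to the per-segment affine inequality $a_j y + b_j - |a_j| \ge 0$. The sufficiency direction goes through exactly as you describe. However, the ``crux'' you flag at the end is a genuine gap, and it cannot be closed, because the collapse to endpoint-independent bounds fails for segments lying in the interior of $[0,1]$. Theorem~\ref{thm:charcontnotdiff} imposes $a_j y + b_j - |a_j| \ge 0$ only for $y$ in segment $j$'s own subdomain $[s_{j-1}, s_j]$, so for $a_j > 0$ the binding evaluation is at $y = s_{j-1}$ and yields $b_j \ge a_j(1 - s_{j-1})$, which is strictly weaker than $b_j \ge a_j$ whenever $s_{j-1} > 0$ (symmetrically, $b_j \ge -a_j(1+s_j)$ versus $b_j \ge -2a_j$ for $a_j < 0$). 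Concretely, take $q(y) = 1$ on $\left[0, \frac{1}{2}\right]$ and $q(y) = y + \frac{1}{2}$ on $\left[\frac{1}{2}, 1\right]$: then $q(y) - |q'(y)|$ equals $1$ on $\left(0,\frac{1}{2}\right)$ and equals $y - \frac{1}{2} \ge 0$ on $\left(\frac{1}{2}, 1\right)$, so by Theorem~\ref{thm:charcontnotdiff} preferences are single-peaked, yet the second segment has $b_2 = \frac{1}{2} < 1 = a_2$, violating the condition you claim is necessary. So the necessity direction of your argument (and of the statement, read literally) does not hold; the stated per-segment conditions are sufficient but not necessary.

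You should also know that the paper's own proof contains the identical gap: after deriving the per-segment inequality, it concludes ``Since $y\in[0,1]$, these conditions become $b_j \geq -2a_j$ and $b_j \geq a_j$,'' i.e., it silently requires each segment's inequality to hold over the whole interval $[0,1]$ rather than over that segment's subdomain. So your instinct about where the difficulty lies is exactly right --- you have located a real flaw rather than a step you merely failed to see how to finish. The repair is to state the characterization with the endpoint-dependent conditions $b_j \ge a_j(1 - s_{j-1})$ for $a_j > 0$ and $b_j \ge -a_j(1 + s_j)$ for $a_j < 0$; notably, these are precisely the per-segment conditions the paper itself works with later, in the proof of Theorem~\ref{thm:piecelin}.
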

                \begin{proof}
                     From Theorem~\ref{thm:charcontnotdiff}, we know that if the $q$ function is continuous but not differentiable at some set of points $D$, the utility function is single-peaked if and only if $q(y)$ satisfies $q(y)-|q'(y)| \geq 0$ for all $y\in[0,1]\backslash D$. When $a_j<0$, the utility function is single-peaked if and only if $a_jy+b_j+a_j \geq 0 \Rightarrow b_j \geq -2a_jy$, and when $a_j>0$, the utility function is single-peaked if and only if $a_jy+b_j-a_j \geq 0 \Rightarrow b_j \geq a_j(y-1)$. Since $y\in[0,1]$, these conditions become $b_j \geq -2a_j$ and $b_j \geq a_j$, respectively. \qed
                \end{proof}

\section{Characterization of Strategyproof and Anonymous Mechanisms}
As we have characterized the conditions for the scaling function to guarantee single-peaked agent preferences, it may seem immediate to apply the results by \citet{Moul80} to characterize strategyproof and anonymous mechanisms under these conditions as phantom mechanisms. However, recall that the mechanism takes the scaling function as an additional input, meaning that the constant value $p_1,\dots,p_{n+1}$ may be dependent on the scaling function. Furthermore, the domain of single-peaked preferences induced by a scaling function meeting the key condition of Theorem~\ref{thm:charcontnotdiff} may be a strict subset of the domain of all arbitrary single-peaked preferences. As a result, the characterization by \citet{Moul80} does not immediately hold in our setting.

Nevertheless, we are still able to obtain a similar characterization of strategyproof and anonymous mechanisms in our setting when the scaling function guarantees single-peaked agent preferences, but we additionally require that the mechanism is continuous. We first define an adaptation of the phantom mechanism, in which each `constant' value admits the scaling function as input.

\begin{definition}[Phantom Mechanism with Scaling]
			Given an agent location profile $\mathbf{x}$, a scaling function $q$, and $n+1$ `phantom values' $\{p_i(q)\}_{i\in [n+1]}$ defined by $p_i: \mathcal{Q}\rightarrow [0,1]$, a \emph{phantom mechanism with scaling} places the facility at  $$\med\{x_1,\cdots,x_n,p_1(q),\cdots,p_{n+1}(q)\}.$$
		\end{definition}

We now present our characterization, which leverages a key result by \citet{BGSS24}.

\begin{theorem}\label{thm:charphant}
    When agents are guaranteed to have single-peaked preferences, a continuous mechanism is strategyproof and anonymous if and only if it is a phantom mechanism with scaling.
\end{theorem}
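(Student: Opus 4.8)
The plan is to prove the two directions separately, treating the scaling function $q$ as a fixed parameter so that, for each $q$, the problem reduces to a characterization on a single-peaked domain. For the ($\impliedby$) direction, suppose $f$ is a phantom mechanism with scaling. Anonymity is immediate, since the median is a symmetric function of its arguments and the phantom values $p_1(q),\dots,p_{n+1}(q)$ depend only on $q$ and not on the agent labels. Continuity follows from the fact that, once $q$ is fixed, the output is the median of the agent locations together with fixed constants; the median is $1$-Lipschitz in the locations (with respect to $\ell_1$), uniformly over all $q$, so the paper's continuity definition is met with $\delta=\epsilon$. For strategyproofness, I would fix $q$ and note that an agent cannot alter the phantom values by misreporting its location, so they remain constants. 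By Theorem~\ref{thm:charcontnotdiff} each agent's preference is single-peaked with peak exactly at its true location $x_i$, and the standard strategyproofness argument for median/phantom rules under single-peaked preferences applies: the outcome is monotone in each report, so misreporting can only move the median to a point no closer to $x_i$, weakly increasing the agent's cost.

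For the ($\implies$) direction, the strategy is to fix $q$ and show that $f(q,\cdot)$ coincides with a phantom rule whose phantoms depend only on $q$. Since $q$ guarantees single-peaked preferences with peak at the agent's location, the assignment sending a peak $t\in[0,1]$ to the utility function $y\mapsto -q(y)|y-t|$ yields a one-parameter family of (generally asymmetric) single-peaked preferences indexed by $t$. The key step is to invoke the characterization of \citet{BGSS24} for continuous, strategyproof, and anonymous mechanisms on such a single-peaked domain, which produces a generalized median (phantom) representation with $n+1$ constants. I would verify that the domain induced by $q$ meets the hypotheses of that result, extract the resulting constants $p_1(q),\dots,p_{n+1}(q)$, and then let $q$ vary to promote them to functions $p_i:\mathcal{Q}\rightarrow[0,1]$, thereby exhibiting $f$ as a phantom mechanism with scaling.

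The main obstacle is this last direction, and specifically the application of \citet{BGSS24} to a restricted domain. Unlike Moulin's original setting, the preferences here do not range over all single-peaked preferences but only over those of the form $-q(y)|y-t|$ generated by the single fixed $q$; on such a strict subdomain, strategyproofness and anonymity alone need not force the phantom structure, which is exactly why the continuity hypothesis is indispensable. I would therefore need to check carefully that this family is rich enough to satisfy the conditions of \citet{BGSS24}---in particular that every point of $[0,1]$ occurs as a peak and that the preferences separate outcomes sufficiently---and confirm that the extracted phantoms are well-defined and consistent as $q$ ranges over $\mathcal{Q}$. A secondary subtlety is reconciling the paper's notion of continuity (with respect to the agent locations $\mathbf{x}$) with the regularity required by the external characterization.
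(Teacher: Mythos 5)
Your proposal follows essentially the same route as the paper: the ($\impliedby$) direction by observing that the phantoms are independent of the reported locations, so for each fixed $q$ the mechanism reduces to a standard phantom rule on a single-peaked domain, and the ($\implies$) direction by invoking Theorem~1 of \citet{BGSS24} on the preference domain induced by $q$ and letting the extracted constants $p_1,\dots,p_{n+1}$ depend on $q$. The richness verification you flag as the main obstacle is not actually needed: the cited result holds for an \emph{arbitrary} subdomain $\mathcal{U}\subseteq\mathcal{U}^{SP}$ of single-peaked preferences, which is precisely how the paper applies it, so the restricted one-parameter family generated by $q$ is covered without further conditions.
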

\begin{proof}
    $(\impliedby)$ Since the `phantom' values of a phantom mechanism with scaling are independent of the agents' locations, the mechanism is, given a scaling function, equivalent to a phantom mechanism with respect to any misreported agent locations or permutations of the agents' labellings. Thus, a phantom mechanism with scaling is strategyproof and anonymous when agents have single-peaked preferences. It remains to show the opposite direction.

    $(\implies)$ Let $\mathcal{U}^{SP}$ denote the set of all single-peaked preferences. Theorem~1 of \citep{BGSS24} states that for an arbitrary $\mathcal{U}\subseteq \mathcal{U}^{SP}$, a continuous mechanism $f$ is anonymous and strategyproof if and only if there exists $p_1,\dots,p_{n+1}$ in $[0,1]$ such that the facility is placed at
    $\text{med}\{x_1,\dots,x_n,p_1,\dots,p_{n+1}\}$. Note that $p_1,\dots,p_{n+1}$ must be independent of the agents' locations, but can depend on the preference domain $\mathcal{U}$. In our model, the set of single-peaked preferences $\mathcal{U}$ is induced by the scaling function, and hence $p_1,\dots,p_{n+1}$ can be dependent on the scaling function. It immediately follows that when agents have single-peaked preferences, a continuous mechanism which is strategyproof and anonymous must be a phantom mechanism with scaling. \qed    
\end{proof}
		\section{Approximation Ratio Results}
		In this section, we give results on the approximation ratio of certain mechanisms. The approximation ratio of a mechanism quantifies its worst-case performance for a specified objective function. 
  \begin{definition}[Total Cost Approximation Ratio]
			Given a mechanism $f$, its total cost approximation ratio is 
			$$
			\max\limits_{q\in \mathcal{Q},\boldsymbol{x}\in X^n}
			\frac{TC(q,f(q,\boldsymbol{x}),\boldsymbol{x})}{TC(q,y^*_{TC},\boldsymbol{x})}.
			$$
		\end{definition}
    The maximum cost approximation ratio is similarly defined by replacing the total cost expressions $TC(q,\cdot,\boldsymbol{x})$ and $y^*_{TC}$ with the corresponding maximum cost expressions $MC(q,\cdot,\boldsymbol{x})$ and $y^*_{MC}$.
    
         While we have defined the approximation ratio for the general space of scaling functions $\mathcal{Q}$, we additionally prove approximation ratio bounds for more restricted classes of scaling functions, such as piecewise linear scaling functions, and scaling functions such that agents are guaranteed to have single-peaked preferences.

  For simplicity of notation, we let $r_q:=\frac{\max_{y\in X}q(y)}{\min_{y\in X}q(y)}$ be the ratio between the highest and lowest values of the scaling function. As we will show, our approximation ratio results are typically a function of $r_q$. We show that when the scaling function is chosen such that agents are guaranteed to have single-peaked preferences, the value of $r_q$ is upper bounded by a constant.
     \begin{theorem}\label{thm:genspe}
      When agents are guaranteed to have single-peaked preferences, $r_q \leq e$.
  \end{theorem}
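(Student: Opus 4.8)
The plan is to reduce the bound on $r_q$ to a Lipschitz estimate for $\ln q$. By Theorem~\ref{thm:charcontnotdiff}, the hypothesis that agents are guaranteed to have single-peaked preferences is exactly the condition $q(y)-|q'(y)|\geq 0$ for all $y\in[0,1]\backslash D$, i.e.\ $|q'(y)|\leq q(y)$ at every point of differentiability. Since $q$ takes strictly positive values, I would divide by $q(y)$ to rewrite this as $\left|\frac{q'(y)}{q(y)}\right|\leq 1$, which is precisely $|(\ln q)'(y)|\leq 1$ wherever $q$ is differentiable. Thus the entire content of the single-peakedness condition is that the logarithm of the scaling function has derivative bounded by $1$ in absolute value off the countable set $D$.

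Granting that $\ln q$ is therefore $1$-Lipschitz on $[0,1]$, the conclusion is immediate. Let $y_{\max}$ and $y_{\min}$ attain the maximum and minimum of the continuous function $q$ on the compact interval $[0,1]$. Then
$$\ln r_q = \ln q(y_{\max}) - \ln q(y_{\min}) \leq |y_{\max}-y_{\min}| \leq 1,$$
and exponentiating yields $r_q\leq e$, as desired. This bound is tight: the function $q(y)=e^y$ satisfies the single-peakedness condition with equality ($q-|q'|=0$) and has $r_q = e^1/e^0 = e$.

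The main obstacle is promoting the pointwise derivative bound to the global Lipschitz estimate, because $D$ is only assumed countable and could in principle be dense, so I cannot simply apply the mean value theorem on a single subinterval avoiding $D$. Instead I would invoke the standard real-analysis fact that a continuous function whose derivative exists and is bounded by $M$ outside a countable set is $M$-Lipschitz. One clean way to prove this in-line is an increment argument: writing $\phi=\ln q$, fixing $\epsilon>0$, enumerating $D=\{d_1,d_2,\dots\}$, and fixing $y_2<y_1$, let $c$ be the largest value such that $\phi(t)-\phi(y_2)\leq (1+\epsilon)(t-y_2)+\epsilon\sum_{d_n\leq t}2^{-n}$ holds throughout $[y_2,c]$. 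If $c<y_1$ and $c\notin D$, the bound $|\phi'(c)|\leq 1<1+\epsilon$ lets the inequality persist slightly past $c$; if $c=d_k\in D$, the jump $\epsilon 2^{-k}$ in the right-hand side together with continuity of $\phi$ again lets it persist past $c$; either case contradicts maximality, so $c=y_1$. Evaluating at $y_1$ gives $\phi(y_1)-\phi(y_2)\leq (1+\epsilon)(y_1-y_2)+\epsilon$, and letting $\epsilon\to 0$ (with the symmetric inequality obtained identically) establishes that $\phi$ is $1$-Lipschitz, completing the argument.
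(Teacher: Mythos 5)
Your proof is correct, and at its core it performs the same task as the paper's proof --- integrating the differential inequality $q(y)-|q'(y)|\ge 0$ to get an exponential bound --- but the route is genuinely different and, in fact, tighter in its logic. The paper splits $[0,1]$ into regions according to the sign of $q'$, introduces an auxiliary function $g(y)=e^{y}q(y)$, claims $g'(y)=e^{y}(q(y)-q'(y))$ (a sign slip: the derivative of $e^{y}q(y)$ is $e^{y}(q(y)+q'(y))$; the intended function for that region is $e^{-y}q(y)$), asserts monotonicity of $g$, and reads off a ratio bound between pairs of points; it then repeats this for the regions with $q'<0$. Your reformulation as $|(\ln q)'(y)|\le 1$ off $D$ buys two concrete advantages. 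First, you avoid the sign case-split altogether, and with it the step the paper leaves implicit: its monotonicity claims are stated per region of fixed sign of $q'$, yet the global ratio $r_q$ compares a maximizer and a minimizer that need not lie in one such region, so some gluing argument is required; your Lipschitz bound on $\ln q$ applies directly to $y_{\max}$ and $y_{\min}$ wherever they sit. Second, and more substantively, you identify and prove the technical fact that both proofs ultimately rest on: that a derivative bound holding only off a countable (possibly dense) set $D$, combined with continuity, yields a global Lipschitz (equivalently, monotonicity) conclusion. The paper silently passes from ``$g'\ge 0$ off $D$'' to ``$g$ is non-decreasing''; your increment argument with the penalty term $\epsilon\sum_{d_n\le t}2^{-n}$ is a correct proof of exactly this: at a maximal point $c\notin D$ the derivative estimate extends the inequality past $c$, while at $c=d_k\in D$ the left-limit slack of $\epsilon 2^{-k}$ plus continuity of $\phi$ does, forcing $c=y_1$. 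Both arguments identify $q(y)=e^{y}$ as the tight example, so the constant $e$ cannot be improved.
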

  \begin{proof}
      From Theorem~\ref{thm:charcontnotdiff}, we know that $q(y)-|q'(y)| \geq 0$ for all $y\in[0,1]\backslash D$, where $D$ denotes the points where $q(y)$ is not differentiable.
      
      First, consider the intervals where $q'(y) > 0$, and consequently $q$ satisfies $q(y) - q'(y) \geq 0$.
      We define a function $g(y):=e^yq(y)$, where $g'(y) = e^y(q(y)-q'(y))$. We know $g(y)$ is non-decreasing because $q(y) - q'(y) \geq 0$. Consider the points $y_1,y_2\in [0,1]$ where $y_1< y_2$. Since $g(y_1) \leq g(y_2)$, we know that $e^{y_1}q(y_1) \leq e^{y_2}q(y_2)$. Also, since $q(y)>0$ and $e^{y}>0$, we have $\frac{q(y_1)}{q(y_2)} \leq e^{y_2-y_1},$ which is maximized when $y_1=0$ and $y_2=1$. Therefore $r_q \leq e$. Furthermore, the function $q(y)=e^y$ satisfies $q(y)-|q'(y)| \geq 0$ and has $r_q=e$.
      
      Lastly, consider the intervals where $q'(y) < 0$, and consequently $q$ satisfies $q(y) + q'(y) \geq 0$.
      Similarly, we define $g(y):=e^{-y}q(y)$, where $g'(y) = e^{-y}(q'(y)-q(y))$. We know $g(y)$ is non-increasing because $q'(y)-q(y) \leq 0$. Consider the points $y_1,y_2\in [0,1]$ where $y_1< y_2$. Since $g(y_1) \geq g(y_2)$, we know that $e^{-y_1}q(y_1) \geq e^{-y_2}q(y_2)$. Also, since $q(y)>0$ and $e^{y}>0$, we have $\frac{q(y_2)}{q(y_1)} \leq e^{y_2-y_1}$, which is again maximized when $y_1=0$ and $y_2=1$. Therefore $r_q \leq e$. \qed
  \end{proof}
  When $q$ is a piecewise linear function, the upper bound of $r_q$ is dependent on how many line segment pieces the function has.
    \begin{theorem}\label{thm:piecelin}
        Suppose the scaling function $q$ is a piecewise linear function with $k$ line segment pieces. When agents are guaranteed to have single-peaked preferences, $r_q\leq (1+\frac{1}{k})^k$.
    \end{theorem}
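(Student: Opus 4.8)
The plan is to bound the ratio contributed by each line segment individually and then combine these bounds multiplicatively, exploiting the fact that the segment lengths sum to the length of the domain. Write the $j$-th segment as $q_j(y) = a_j y + b_j$ on a subinterval of length $\ell_j$, so that $\sum_{j=1}^{k} \ell_j = 1$. By Theorem~\ref{thm:charcontnotdiff} (equivalently Proposition~\ref{prop:piece}), the single-peaked condition gives $q(y) \ge |q'(y)| = |a_j|$ at every point of the segment; this slope bound is the only input I need from single-peakedness.

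First I would establish the key per-segment estimate: across any single segment, the ratio of the larger endpoint value to the smaller one is at most $1 + \ell_j$. For an increasing segment ($a_j > 0$) on $[c_j, d_j]$, the condition $q(c_j) \ge a_j$ at the left endpoint gives
$$\frac{q(d_j)}{q(c_j)} = 1 + \frac{a_j \ell_j}{q(c_j)} \le 1 + \ell_j,$$
and the decreasing case is symmetric, using $q(d_j) \ge |a_j|$ at the right endpoint (where a decreasing $q$ attains its smallest value).

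Next I would combine these estimates globally. Since each $q_j$ is linear, the global maximum and minimum of $q$ are attained at breakpoints; call them $y_{\max}$ and $y_{\min}$, and assume $y_{\min} \le y_{\max}$ (the other case is symmetric). Tracing from $y_{\min}$ to $y_{\max}$ through the intermediate breakpoints, $r_q = q(y_{\max})/q(y_{\min})$ telescopes into a product of per-segment endpoint ratios. Every decreasing segment on this path contributes a factor strictly below $1$ and may be discarded, while every increasing segment contributes at most $1 + \ell_j$ by the estimate above. Bounding the increasing-only product from above by the product over all $k$ segments (each factor being at least $1$), I obtain $r_q \le \prod_{j=1}^{k} (1 + \ell_j)$. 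Finally, applying AM--GM to the $k$ terms $1 + \ell_j$, whose sum is $k + \sum_j \ell_j = k + 1$, yields $\prod_{j=1}^{k} (1 + \ell_j) \le \left(1 + \frac{1}{k}\right)^{k}$.

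The main obstacle is the combination step rather than the per-segment bound: one must argue that a non-monotone scaling function cannot do worse than a monotone one. The two observations that make this work are that the extrema occur at breakpoints (so the traced path consists of whole, not partial, segments, keeping the count at $k$) and that decreasing segments only shrink the accumulated ratio. Consequently the worst case is a function increasing across all $k$ equal-length segments, which is precisely where the AM--GM bound is tight; as a sanity check, $\left(1+\frac{1}{k}\right)^{k} \to e$, recovering the bound of Theorem~\ref{thm:genspe} in the limit $k \to \infty$.
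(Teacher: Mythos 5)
Your proof is correct and takes essentially the same route as the paper's: the per-segment bound $r_{q_j}\leq 1+\ell_j$ derived from the slope condition $q\geq|a_j|$ of Theorem~\ref{thm:charcontnotdiff}, followed by the AM--GM inequality applied to the product $\prod_j(1+\ell_j)$ with $\sum_j \ell_j=1$. If anything, your telescoping argument for non-monotone $q$ (extrema lie at breakpoints, decreasing segments contribute factors below $1$ and can be discarded) is more explicit than the paper's treatment, which simply asserts that it suffices to assume $q$ is increasing between its global minimum and maximum.
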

    \begin{proof}
        Suppose for $j\in \{1,\dots,k\}$ that each line segment of $q$ is denoted as $q_j(y) = a_jy+b_j$.

         From Theorem~\ref{thm:charcontnotdiff}, for agents to always have single-peaked preferences, we require that $q(y)-|q'(y)| \geq 0$ for all $y\in[0,1]\backslash\{D\}$, where $\{D\}$ denotes the set of points where $q(y)$ is not differentiable. This implies that for each line segment, if $a_j > 0$, then $a_jy+b_j-a_j > 0 \implies b_j \geq (1-y)a_j$. Similarly, if $a_j < 0$, then $b_j \geq -(1+y)a_j$.

         We denote the endpoints of the scaling function's line segment pieces as the set $S = \{0,s_1,s_2\cdots,s_{k-1},1\}$. Consider a line segment piece $q_j= a_jy+b_j$, where $y\in[s_{j-1},s_{j}]$. We now find an expression for $r_{q_j}$, which is the ratio between the maximum and minimum values of the scaling function on $[s_{j-1},s_{j}]$.
         
         If $a_j > 0$, then $$r_{q_j} = \frac{q_j(s_{j})}{q_j(s_{j-1})} = \frac{a_js_j+b_j}{a_js_{j-1}+b_j}.$$ Since we require $b_j \geq (1-y)a_j$ on $y\in[s_{j-1},s_{j}]$, we know that $b_j \geq (1-s_{j-1})a_j$. Therefore $$r_{q_j}=\frac{a_js_j+b_j}{a_js_{j-1}+b_j} \leq \frac{a_js_j+(1-s_{j-1})a_j}{a_js_{j-1}+(1-s_{j-1})a_j} = (1+s_j-s_{j-1}).$$
         
         If $a_j < 0$, then $$r_{q_j} = \frac{q_j(s_{j-1})}{q_j(s_{j})} = \frac{a_js_{j-1}+b_j}{a_js_{j}+b_j}.$$ Since we require $b_j \geq -(1+y)a_j$ on $y\in[s_{j-1},s_{j}]$, we know that $b_j \geq -(1+s_{j})a_j$. Therefore $$r_{q_j}=\frac{a_js_{j-1}+b_j}{a_js_{j}+b_j} \leq \frac{a_js_{j-1}-(1+s_{j})a_j}{a_js_{j}-(1+s_{j})a_j} = (1+s_j-s_{j-1}),$$ and we know that for each $q_j$, $r_{q_j} \leq (1+s_j-s_{j-1})$.

         Since $q$ is piecewise linear, its global maxima and minima must be elements of $S$. We take an arbitrary global minimum $s_{min}:=\arg\min_{y\in X]}q(y)$ and global maximum $s_{max}:=\arg\max_{y\in X}q(y)$, where $s_{min},s_{max}\in S$, and suppose w.l.o.g. that $s_{min} < s_{max}$. It also suffices to assume that for each line segment $q_j$ between $s_{min}$ and $s_{max}$, we have $a_j>0$ (i.e. the function is strictly increasing in $[s_{min},s_{max}]$). This means that $r_q$ is the product of the $r_{q_j}$ between $s_{min}$ and $s_{max}$. 
         
         Finally, since $1 \leq r_{q_j} \leq (1+s_j-s_{j-1})$, and applying the arithmetic mean-geometric mean inequality, we have $r_q \leq \prod \limits_{i=1}^k (1+s_i-s_{i-1})\leq (1+\frac{1}{k})^k$. \qed
    \end{proof}
  When the number of line segments $k$ approaches infinity, the upper bound of $r_q$ asymptotically meets the upper bound of $r_q=e$ for general scaling functions. Intuitively, this can also be seen by constructing a piecewise linear function which approximates the function $q(y)=e^y$.
  
  We follow up with another result which restricts the space of phantom mechanisms with scaling when we require a bounded approximation ratio.
  \begin{lemma}\label{lem:simpphant}
      Let $f$ be a phantom mechanism with scaling, with phantoms $p_1(q),\dots,p_{n+1}(q)$. If there exists a scaling function $q$ such that $p_i(q)\neq 0$ and/or $p_i(q)\neq 1$ for all $i\in [n+1]$, then $f$ has an unbounded approximation ratio for total cost and maximum cost.
  \end{lemma}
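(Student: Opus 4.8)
The plan is to exploit the fact that a phantom mechanism with scaling confines the facility to the interval $[\min_i p_i(q),\max_i p_i(q)]$ determined by the phantoms, so that if the phantoms cannot reach an endpoint of $X$, then an agent profile concentrated at that endpoint forces a strictly positive cost against an optimum of zero. I read the hypothesis as asserting that there is a scaling function $q$ for which either no phantom lies at $0$ (i.e.\ $p_i(q)\neq 0$ for all $i$) or no phantom lies at $1$ (i.e.\ $p_i(q)\neq 1$ for all $i$). The two cases are symmetric, so I would treat the first and obtain the second by reflecting the construction across $\tfrac12$.

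Fix such a $q$ with $\delta:=\min_{i\in[n+1]}p_i(q)>0$, and consider the degenerate profile $\mathbf{x}=(0,\dots,0)$ with all $n$ agents at $0$. The key step is the order-statistic computation for the median: among the $2n+1$ points $\{0,\dots,0,p_1(q),\dots,p_{n+1}(q)\}$, the first $n$ positions in sorted order are the agent copies of $0$ (since every phantom is at least $\delta>0$), so the $(n+1)$-th smallest value, which is exactly $\med\{x_1,\dots,x_n,p_1(q),\dots,p_{n+1}(q)\}$, equals the smallest phantom $\delta$. Hence $f$ places the facility at $\delta>0$.

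It remains to compare costs. Placing the facility at $0$, where all agents sit, gives $TC(q,0,\mathbf{x})=MC(q,0,\mathbf{x})=0$, so both optimal costs vanish. The mechanism, on the other hand, incurs $TC(q,\delta,\mathbf{x})=q(\delta)\,n\delta>0$ and $MC(q,\delta,\mathbf{x})=q(\delta)\,\delta>0$, both strictly positive because $q$ is everywhere positive and $\delta>0$. A strictly positive mechanism cost over a zero optimal cost makes the approximation ratio unbounded for both objectives. The symmetric case, in which $p_i(q)\neq 1$ for all $i$, is handled identically by taking $\mathbf{x}=(1,\dots,1)$, for which the facility lands on $\max_i p_i(q)<1$ while the optimum again has zero cost.

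The argument is short, and the only point requiring care is the median bookkeeping, namely verifying that piling all agents at an endpoint really pushes the median onto the extreme phantom rather than onto an agent, together with the correct parsing of the ``and/or'' hypothesis so that both failure modes (no phantom at $0$, no phantom at $1$) are covered. I do not expect a substantial obstacle here; the degenerate all-agents-at-one-point profile is the crux, since it drives the optimal cost to exactly zero and thereby turns any nonzero mechanism cost into an unbounded ratio.
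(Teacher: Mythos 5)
Your proof is correct and follows essentially the same approach as the paper: exploit a $q$ whose phantoms all avoid one endpoint of $X$, place every agent at that endpoint so the optimal cost is zero, and observe that the median lands on an extreme phantom, giving the mechanism strictly positive cost and hence an unbounded ratio (the paper does the mirror case, agents at $1$ with all phantoms strictly left of $1$). Your explicit order-statistic check that the median equals the smallest phantom is a harmless elaboration of the same argument.
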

  \begin{proof}
      Due to symmetry it suffices to consider a scaling function $q$ such that $p_i(q)\neq 1$ for all $i\in [n+1]$. If all agents are located at $1$, the optimal total and maximum cost are $0$, but $f$ places the facility at a location strictly left of $1$, as all $n+1$ phantoms are strictly left of $1$. Therefore the total cost and maximum cost approximation ratios are unbounded. \qed
  \end{proof}
 This lemma shows that when proving the approximation results, it suffices to consider phantom mechanisms with scaling where $p_1=0$ and $p_{n+1}=1$. This is equivalent to a phantom mechanism with scaling where there are only $n-1$ phantoms, and by Theorem~\ref{thm:charphant}, this characterizes the strategyproof and anonymous mechanisms which place the facility between the leftmost and rightmost agent locations.
		\subsection{Total Cost}
	We now prove approximation ratio results for the objective of total cost, a standard utilitarian measure of efficiency. We begin by giving a lower bound on the approximation ratio achievable by a phantom mechanism with scaling.
		\begin{theorem}\label{thm:totalcostgen}
		    There exists a scaling function $q$ such that any phantom mechanism with scaling has a total cost approximation ratio of at least $r_q$.
		\end{theorem}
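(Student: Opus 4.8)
The plan is to exhibit one scaling function that defeats every phantom mechanism with scaling, by tailoring the agent profile to the mechanism's (now fixed) phantom values. By Lemma~\ref{lem:simpphant} I may assume the mechanism keeps a phantom at $0$ and a phantom at $1$ (otherwise its ratio is already unbounded, hence at least $r_q$), so the output always lies in $[x_1,x_n]$. I would then fix $q$ to be the symmetric ``tent'': $q(0)=q(1)=\min_y q(y)$ and $q(1/2)=\max_y q(y)$, linear on $[0,1/2]$ and on $[1/2,1]$. This makes $r_q=\max_y q(y)/\min_y q(y)$, places the two global minima at the endpoints, and places the unique global maximum in the interior, which is exactly the asymmetry the construction needs.

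The engine is a two-cluster profile. Put $n/2$ agents at a point $L$ and $n/2$ at a point $R>L$; then $\sum_i|y-x_i|$ is \emph{constant} (equal to $\tfrac n2(R-L)$) for every $y\in[L,R]$, so on this interval the total cost equals $q(y)\cdot\tfrac n2(R-L)$ and is minimized precisely where $q$ is smallest. A short order-statistic count (the output is the $(n+1)$-st smallest of the $n$ agent locations and the $n+1$ phantoms) shows that on this profile the mechanism outputs the median phantom $p_{(n/2+1)}$ clamped into $[L,R]$. Reading off the fixed phantoms, I would then choose the clusters adaptively: if $p_{(n/2+1)}\ge 1/2$ take $L=0,R=1/2$, and if $p_{(n/2+1)}\le 1/2$ take $L=1/2,R=1$. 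In either case the clamp pushes the output onto the peak $1/2$, so the mechanism pays $q(1/2)\cdot\tfrac n4=\max_y q(y)\cdot\tfrac n4$, while the optimum is at most the cost at the minimizing endpoint inside $[L,R]$, namely $\min_y q(y)\cdot\tfrac n4$. Dividing yields a ratio of at least $r_q$ against every phantom configuration.

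I expect the delicate part to be exactly this control of the mechanism's output: the whole argument hinges on the clamped-median-phantom fact and on the observation that, because the extreme phantoms are pinned at $0$ and $1$, one of the two cluster placements always forces the output onto the interior maximum. The remaining obstacle is the parity of $n$: for odd $n$ the profile cannot be split into two equal clusters, so the distance term is no longer flat and the clean cancellation fails. I would address this by using clusters of sizes $\lceil n/2\rceil$ and $\lfloor n/2\rfloor$, placing the larger cluster at the minimizing endpoint (the extra distance it induces at the output only helps), and either argue the bound is approached as $n$ grows or absorb the $O(1/n)$ discrepancy through a limiting argument; verifying that some admissible side/size choice still drives the output into the high-$q$ region for every phantom tuple, and that the comparison point bounding the optimum is feasible, are the two facts that need checking.
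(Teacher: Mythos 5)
Your proof is correct for even $n$ and shares the paper's skeleton---pin a phantom at each endpoint via Lemma~\ref{lem:simpphant}, then use two equal clusters so that $\sum_i|y-x_i|$ is constant on the cluster interval and the approximation ratio reduces to a ratio of $q$-values---but your execution differs in a way that matters, and is in fact more robust than the paper's. The paper fixes a $q$ with maxima at $0,0.5,1$ and minima at $0.25,0.75$, and case-splits on the number $k$ of free phantoms in $[0,0.5]$, claiming that whenever $k\le n-2$ the profile with clusters at $0$ and $0.5$ forces the output onto $0.5$. That claim fails once $k\ge n/2$ and those phantoms sit strictly inside $(0,0.5)$: for $n=4$ and phantoms $\{0,0.25,0.25,0.75,1\}$, the output on that profile is $\med\{0,0,0,0.25,0.25,0.5,0.5,0.75,1\}=0.25$, which is the \emph{optimum}, so the paper's Case~1 bound evaporates (this mechanism can still be defeated under the paper's $q$, e.g.\ by $\mathbf{x}=(0,0.5,1,1)$, but not by the profiles the paper uses). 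Your construction is immune to exactly this failure mode: because your tent has its unique maximum at $1/2$ and its minima at the endpoints $0$ and $1$, there is no interior minimum for phantoms to hide at, and the clamped-median-phantom identity (output $=p_{(n/2+1)}$ clamped to $[L,R]$), which you correctly isolate as the crux, lets you choose the cluster pair adaptively so the clamp always lands on the peak while a global minimum of $q$ sits at an endpoint of $[L,R]$. This gives ratio $r_q$ against every phantom configuration and is a clean, correct argument for even $n$.

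Two caveats. First, invoke Lemma~\ref{lem:simpphant} pointwise at your chosen $q$: what you need is that if no phantom sits at $0$ (or at $1$) under the tent itself, then the all-agents-at-that-endpoint profile already yields an infinite ratio \emph{for that same $q$}; this is what the lemma's proof establishes, and it is how the paper uses it as well. Second, the parity issue you flag is real, and your sketched repair does not yet close it: with clusters of sizes $\lceil n/2\rceil$ and $\lfloor n/2\rfloor$, the output becomes the clamp of $p_{(n+1)/2}$ or $p_{(n+3)/2}$ depending on which side is heavier, and a mechanism whose two middle phantoms straddle $1/2$ (say at $0.4$ and $0.6$) evades both of your adaptive choices, so two-cluster profiles certify only a ratio bounded away from $r_q$ for fixed odd $n$ (roughly $r_q(n-1)/(n+1)$ at best). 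Since the paper's own proof silently assumes $n$ even, your proposal matches the paper's actual coverage; handling odd $n$ exactly seems to require either profiles beyond two clusters or a limiting/asymptotic restatement of the theorem.
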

  \begin{proof}
   By Lemma~\ref{lem:simpphant}, it suffices to consider phantom mechanisms with scaling where at least one phantom is on $0$ and at least one phantom is on $1$ (regardless of the scaling function), so for the remainder of the proof, we are concerned with the placement of the remaining $n-1$ phantoms.
  
  Consider an arbitrary phantom mechanism with scaling $f$, and a continuous scaling function $q$ which has global maxima at $0$, $0.5$, and $1$, and global minima at $0.25$ and $0.75$. We divide the remainder of the proof into two cases. In the first case, suppose that under $q$, $f$ places $k\in \{0,\dots,n-2\}$ phantoms in $[0,0.5]$ (and the remaining phantoms in $(0.5,1]$). Then for an agent location profile with even $n$, where $\frac{n}{2}$ agents are on $0$ and $\frac{n}{2}$ agents are on $0.5$, the facility will be placed on $0.5$. This results in a total cost of $\max_y q(y)\frac{n}{2}(0.5-0)$, but the optimal facility placement of $0.25$ gives a total cost of $\min_y q(y)*0.25n$. Dividing these terms gives us our lower bound of $r_q$. In the remaining case, suppose that under $q$, $f$ places all $n-1$ phantoms in $[0,0.5]$. Then we consider an agent location profile with even $n$ where $\frac{n}{2}$ agents are on $0.5$ and $\frac{n}{2}$ agents are on $1$. Here, the facility will be placed on $0.5$ to give a total cost of $\max_y q(y)\frac{n}{2}(1-0.5)$, but the optimal facility placement of $0.75$ gives a total cost of $\min_y q(y)*0.25n$. Again, dividing these terms gives our lower bound of $r_q$, completing the proof by exhaustion of cases. \qed
  \end{proof}
  From this result, we can also easily see that among all of the phantom mechanisms with scaling, simply placing the facility at the median agent gives the best possible approximation ratio for total cost.
  \begin{theorem}
      The median mechanism has a total cost approximation ratio of $r_q$.
  \end{theorem}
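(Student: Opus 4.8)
The plan is to establish matching upper and lower bounds. The lower bound is immediate: the median mechanism is itself a phantom mechanism with scaling (its phantom values are the constant functions $0$ and $1$), so Theorem~\ref{thm:totalcostgen} already furnishes a scaling function on which \emph{every} phantom mechanism with scaling, the median mechanism included, has total cost approximation ratio at least $r_q$. It therefore remains only to prove the matching upper bound: that for every scaling function $q$ and every profile $\boldsymbol{x}$, the ratio between the median mechanism's total cost and the optimal total cost is at most $r_q$.

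For the upper bound, I would let $y_{med} = \med\{x_1,\dots,x_n,p_1,\dots,p_{n+1}\}$ denote the mechanism's output; with the stated phantom configuration this is exactly the median agent location, and crucially it minimizes the \emph{unscaled} objective $\sum_i |y - x_i|$ over all $y \in X$ --- the classical property of the median as the minimizer of the sum of absolute deviations. The key idea is that in the total cost the scaling factor and the distance sum decouple multiplicatively, so I would factor the ratio as
\begin{align*}
\frac{TC(q,y_{med},\boldsymbol{x})}{TC(q,y^*_{TC},\boldsymbol{x})}
= \frac{q(y_{med})}{q(y^*_{TC})} \cdot \frac{\sum_i |y_{med}-x_i|}{\sum_i |y^*_{TC}-x_i|},
\end{align*}
and then bound each factor separately. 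The second factor is at most $1$ since $y_{med}$ minimizes $\sum_i |y-x_i|$, giving in particular $\sum_i |y_{med}-x_i| \le \sum_i |y^*_{TC}-x_i|$. The first factor is at most $\max_{y} q(y)/\min_{y} q(y) = r_q$. Multiplying the two bounds yields a ratio of at most $r_q$, as desired.

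Combining the two directions shows the approximation ratio is exactly $r_q$. I do not anticipate a genuine obstacle: the whole argument rests on the single observation that the distance sum and the scaling factor can be bounded independently, with the former controlled by optimality of the median and the latter by the definition of $r_q$. The only points needing a little care are confirming that the stated phantom configuration indeed outputs the distance-minimizing median (including the even-$n$ case, where the tie-breaking choice of the leftmost point of the central interval still minimizes the distance sum), and observing that $y^*_{TC}$ need not coincide with $y_{med}$ --- but since the median is distance-optimal regardless of where $y^*_{TC}$ lies, the distance-sum factor is bounded by $1$ in every case.
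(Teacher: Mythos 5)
Your proposal is correct and is essentially the paper's own argument: both the lower bound (inherited from Theorem~\ref{thm:totalcostgen}) and the upper bound via decoupling the scaling factor from the distance sum, using the median's optimality for the unscaled sum of distances together with $\max_y q(y)/\min_y q(y) = r_q$. Your multiplicative factorization of the ratio is just a cosmetic rephrasing of the paper's separate bounds on the mechanism's cost and the optimal cost, so there is nothing substantively different to compare.
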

  \begin{proof}
      The median mechanism minimizes the sum of (unscaled) distances, so its total cost is at most $\max_y q(y) \sum_i |f_{med}(x)-x_i|$, whilst the optimal total cost is at least $\min_y q(y) \sum_i |f_{med}(x)-x_i|$. This shows the approximation ratio is at most $r_q$, and we have a matching lower bound from Theorem~\ref{thm:totalcostgen}. \qed
  \end{proof}
  Recall that phantom mechanisms with scaling characterize all strategyproof and anonymous mechanisms when agents have single-peaked preferences. We can consider scaling functions which ensure that agents have single-peaked preferences, and use Theorems~\ref{thm:genspe} and \ref{thm:piecelin} to find the specific total cost approximation ratio lower bounds for any strategyproof and anonymous mechanism.
  \begin{corollary}\label{cor: e}
  For arbitrary continuous scaling functions, when agents are guaranteed to have single-peaked preferences, any strategyproof and anonymous mechanism has a total cost approximation ratio of at least $e$.
  \end{corollary}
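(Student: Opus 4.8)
The plan is to read Corollary~\ref{cor: e} as a specialization of the general phantom lower bound to the single-peaked regime, so that the constant $e$ enters through Theorem~\ref{thm:genspe}. First I would invoke Theorem~\ref{thm:charphant}: since the hypothesis guarantees single-peaked preferences, every continuous strategyproof and anonymous mechanism is a phantom mechanism with scaling, so it suffices to lower-bound the total cost approximation ratio over this class. By Lemma~\ref{lem:simpphant} I may assume $p_1(q)=0$ and $p_{n+1}(q)=1$ for the relevant $q$ (otherwise the ratio is already unbounded, hence trivially at least $e$), which leaves $n-1$ free phantoms whose positions are allowed to depend on $q$.

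The main step is to combine Theorem~\ref{thm:totalcostgen}, which says that any phantom mechanism with scaling has total cost approximation ratio at least $r_q$ on a suitable trapping scaling function, with Theorem~\ref{thm:genspe}, which pins the single-peaked maximum of $r_q$ at $e$. The witness I would use is $q(y)=e^{y}$: it satisfies $q(y)-|q'(y)|=0\ge 0$, so by Theorem~\ref{thm:charcontnotdiff} it keeps every agent's preference single-peaked, and it attains $r_q=e$. With $q$ fixed, the phantom values become fixed reals, and I would build a worst-case profile exactly in the spirit of Theorem~\ref{thm:totalcostgen}: a two-cluster profile (a tuned fraction of agents at a point $t$ near $1$ and the rest at $1$, or the mirror profile at $0$) forces the median-of-phantoms output onto the high-scaling cluster while, because $q$ is monotone, the optimum collapses onto the opposite endpoint, driving the cost ratio up to the spread $r_q=e$. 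A symmetric profile under $q(y)=e^{1-y}$ (equally single-peaked, $r_q=e$) would catch any mechanism whose phantoms are biased the other way, so that the two constructions together handle every phantom placement.

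The step I expect to be the genuine obstacle is reconciling the trapping construction with the value $r_q=e$. The construction literally written in Theorem~\ref{thm:totalcostgen} places the extrema of $q$ at the interior points $0.25,0.5,0.75$; for a single-peaked $q$ the one-Lipschitz bound on $\ln q$ (the engine behind Theorem~\ref{thm:genspe}) then forces $\max q/\min q\le e^{1/4}$, so that verbatim instantiation only certifies $e^{1/4}$, not $e$. To reach $e$ one must re-run the trap with the near-monotone extremal function $q=e^{y}$ and verify the profile by hand. The delicate points are (i) that the free phantoms may depend on $q$ and may sit in the interior, so the facility need not land cleanly at an endpoint, and (ii) that the trapped-to-optimal cost gap only approaches $e$ as the cluster fractions tend to the $1{:}e$ balance dictated by the endpoint-switch of $e^{y}$, i.e.\ in the large-$n$ limit. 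I would therefore phrase the bound as the supremum over instance sizes, and argue via the monotonicity of the median-of-phantoms map in the cluster size that no interior placement of the phantoms can keep the ratio bounded below $e$; this monotonicity argument, rather than the three-line combination of the cited theorems, is where the real content lies.
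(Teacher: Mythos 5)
Your reconstruction of the paper's intended derivation is accurate: the paper indeed obtains this corollary by chaining Theorem~\ref{thm:charphant} (reduction to phantom mechanisms with scaling), Lemma~\ref{lem:simpphant}, Theorem~\ref{thm:totalcostgen} (ratio at least $r_q$ for a trapping $q$), and Theorem~\ref{thm:genspe} ($r_q\le e$ under single-peakedness). Your diagnosis of the friction in that chain is also correct and genuinely valuable: the trapping function of Theorem~\ref{thm:totalcostgen} has adjacent extrema at distance $1/4$, while Theorem~\ref{thm:charcontnotdiff} forces $\ln q$ to be $1$-Lipschitz, so that construction, restricted to the single-peaked regime, certifies only $e^{1/4}$. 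The paper glosses over this; you did not.

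However, your proposed repair fails at exactly the step you rest it on: ``the two constructions together handle every phantom placement.'' The phantom values $p_i(q)$ are \emph{functions of} $q$, so the adversary must trap the mechanism with a \emph{single} scaling function; a mechanism is free to choose unrelated phantom vectors for $e^{y}$ and for $e^{1-y}$, and being caught under one implies nothing about the other. Concretely, consider the mechanism that, whenever $q$ is nondecreasing, places $A=\big\lceil \frac{\sqrt{e}}{1+\sqrt{e}}(n+1)\big\rceil$ phantoms at $0$ and the rest at $1$ (equivalently, under $e^{y}$ it outputs the order statistic $x_{(n+1-A)}$), and mirrors this when $q$ is nonincreasing. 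Under $q(y)=e^{y}$, your two-cluster profile with $m$ agents at $0$ and $n-m$ at $1$ only forces the facility to $1$ when $m\le n-A$, giving ratio at most $\frac{e(n-A)}{A}\approx\sqrt{e}$; profiles that instead force the facility onto the left cluster have their optimum there as well, since under $e^{y}$ the total cost is increasing to the right of a cluster holding fewer than half... (symmetrically for the near-$1$ clusters and for $e^{1-y}$). Working through the remaining placements shows this mechanism stays near $\sqrt{e}$ on every profile you describe, so monotone exponentials certify roughly $\sqrt{e}$, not $e$, and your closing claim that no interior phantom placement keeps the ratio below $e$ is precisely what this mechanism refutes. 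The deeper point is that Theorem~\ref{thm:totalcostgen} works because a single $q$ with two separated interior minima traps the median phantom whichever side it sits on; under the Lipschitz constraint on $\ln q$, two disjoint dips of depth $\delta$ cost $4\delta$ of domain length (capping that route at $e^{1/4}$), while the strongest single-$q$ trap compatible with single-peakedness appears to be a tent such as $q(y)=e^{\min(y,\,1-y)}$, which forces every phantom mechanism to ratio $\sqrt{e}$ via the balanced profiles on $\{0,\frac12\}$ and $\{\frac12,1\}$ but no more. So the gap you found is real, but it cannot be closed along your route; the constant $e$ in the corollary itself is doubtful, and any correct proof would need an adversarial family fundamentally different from both the paper's and yours.
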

  \begin{corollary}\label{cor: 2}
  For piecewise linear scaling functions with $k$ line segments, when agents are guaranteed to have single-peaked preferences, any strategyproof and anonymous mechanism has a total cost approximation ratio of at least $(1+\frac{1}{k})^k$.
  \end{corollary}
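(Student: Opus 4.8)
The plan is to reduce the statement about arbitrary strategyproof and anonymous mechanisms to a statement about phantom mechanisms, and then to import a quantitative bound on $r_q$. Since we restrict to scaling functions guaranteeing single-peaked preferences, Theorem~\ref{thm:charphant} tells us that every continuous, strategyproof, and anonymous mechanism is a phantom mechanism with scaling, so it suffices to lower bound the total cost approximation ratio over phantom mechanisms with scaling. By Lemma~\ref{lem:simpphant}, any such mechanism with bounded ratio must satisfy $p_1(q)=0$ and $p_{n+1}(q)=1$; I may therefore assume this and treat the remaining $n-1$ phantoms as the only free parameters.

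Next I would exhibit a single witness: a piecewise linear scaling function $q$ with exactly $k$ segments that satisfies the single-peaked condition of Proposition~\ref{prop:piece} and attains $r_q=(1+\frac{1}{k})^k$. The natural candidate is the extremal function from the tightness analysis of Theorem~\ref{thm:piecelin}: a strictly increasing function on $k$ equal-length pieces, each sitting exactly on the boundary $b_j=(1-s_{j-1})a_j$ of the single-peaked constraint, so that $r_{q_j}=1+\frac{1}{k}$ on every piece and the product telescopes to $(1+\frac{1}{k})^k$, with the minimum of $q$ at $0$ and the maximum at $1$.

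The real work lies in showing that for this fixed $q$, every placement of the $n-1$ free phantoms admits an agent profile whose total cost ratio is at least $r_q$. I would mirror the case analysis of Theorem~\ref{thm:totalcostgen}: when enough free phantoms are pushed to the right, a profile splitting the agents between $0$ and $1$ forces the facility onto a high-$q$ location while the cost-optimal placement sits near the low-$q$ end; and when the free phantoms are pushed to the left, a \emph{cluster plus outlier} profile (the bulk of the agents at a point $c$ with a single agent to its left) forces the facility onto the outlier, again far from the optimum. In each case the distance terms cancel and the ratio collapses to $\max_y q(y)/\min_y q(y)=r_q$. Matching this against the median mechanism, whose total cost ratio is at most $r_q\le(1+\frac{1}{k})^k$, then yields tightness.

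I expect this third step to be the main obstacle, for a concrete reason: Theorem~\ref{thm:totalcostgen} attains its bound using a $W$-shaped $q$ with interior minima, but such a shape is incompatible with the single-peaked constraint at the extremal ratio, since $q(y)\ge|q'(y)|$ forces $\ln q$ to be $1$-Lipschitz and a function with two interior valleys on $[0,1]$ cannot reach $r_q=(1+\frac{1}{k})^k$. The monotone witness therefore has only a single min--max pair to exploit, so the profiles must be chosen more carefully than in Theorem~\ref{thm:totalcostgen} to cover every phantom configuration, including those that hedge by spreading phantoms across the interior of $[0,1]$.
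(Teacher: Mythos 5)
You follow the same reduction the paper intends (Theorem~\ref{thm:charphant} to pass to phantom mechanisms with scaling, Lemma~\ref{lem:simpphant} to pin $p_1(q)=0$ and $p_{n+1}(q)=1$), and your diagnostic observation is correct and sharper than anything the paper says: the witness in Theorem~\ref{thm:totalcostgen} has global maxima at $0,0.5,1$ and global minima at $0.25,0.75$, and since the single-peaked condition $q(y)\geq|q'(y)|$ forces $\ln q$ to be $1$-Lipschitz, any such function has $r_q\leq e^{1/4}<2\leq(1+\frac{1}{k})^k$. So the paper's own route---quoting Theorem~\ref{thm:totalcostgen} and Theorem~\ref{thm:piecelin} and declaring the corollary immediate---does not actually produce a witness attaining the claimed bound, and you correctly see that a new witness and a new profile analysis are needed.

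The gap is that your repair, the step you yourself flag as ``the real work,'' is not only unproven but cannot go through as described. With a monotone witness the mechanism can adapt its phantoms toward the low-$q$ end, and the very constraint $q\geq|q'|$ then protects it. Concretely, take $n=2$, any increasing $q$ with $q\geq q'$ (extremal or not), and the phantom mechanism with the free phantom at $0$, so the facility sits at the leftmost agent $x_1$; its cost is $q(x_1)(x_2-x_1)$. On $[x_1,x_2]$ the distance sum is constant and $q$ is minimized at $x_1$; for $y=x_1-t$ with $t>0$, the Lipschitz bound gives $q(y)\geq q(x_1)e^{-t}$, so the cost there is at least $q(x_1)e^{-t}(c+2t)$ with $c=x_2-x_1$, and $e^{-t}(c+2t)\geq c$ holds because $c\leq 1-t$ and $2t\geq(1-t)(e^t-1)$ on $[0,1]$. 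Hence this mechanism has ratio exactly $1$ on every profile for every increasing admissible $q$: no choice of profiles forces anything, let alone $r_q$. This is the general pattern, not an accident of $n=2$: against a monotone witness, your ``split'' profiles force at most $\frac{j}{n-j}r_q$ where $j$ is capped below $n/2$ whenever the mechanism keeps enough phantoms on the cheap side (for the median mechanism this caps at $\frac{n-1}{n+1}r_q$), and your ``cluster plus outlier'' profiles force a ratio of order $(n-1)\cdot\frac{q(a)}{q(c)}\leq n-1$, i.e.\ leverage coming from agent multiplicity rather than from $r_q$, which exceeds $r_q$ only once $n$ is large. So the assertion that ``the distance terms cancel and the ratio collapses to $r_q$'' is false, and for fixed small $n$ (certainly $n=2$, and $n=3$ with $k\geq 2$) the monotone-witness strategy provably cannot reach $(1+\frac{1}{k})^k$. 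Closing the corollary for every fixed $n$ requires either a genuinely different witness construction or an argument that exploits the $q$-dependence of the phantoms across several scaling functions simultaneously---neither of which your sketch (nor, for that matter, the paper) supplies.
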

  Again, the median mechanism would also give the best total cost approximation ratio, and can be suitably implemented for these restricted sets of scaling functions. However, in a scenario where the scaling function does not guarantee single-peaked agent preferences, phantom mechanisms with scaling no longer guarantee strategyproofness, and a constant mechanism has an unbounded approximation ratio. The \emph{dictator mechanism}, which selects a specific agent to be the dictator and places the facility at their reported location, is known to be strategyproof for general agent preferences (though it is clearly not anonymous). We show that this mechanism has a bounded approximation ratio for total cost.
		\begin{proposition}
			The dictator mechanism has a total cost approximation ratio of $(n-1)r_q$.
		\end{proposition}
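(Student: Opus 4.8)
The plan is to prove matching upper and lower bounds. For the upper bound, fix the dictator to be agent $d$, so the mechanism outputs $x_d$ and incurs total cost $q(x_d)\sum_i |x_d - x_i|$. Write $y^* := y^*_{TC}$ and $S := \sum_i |y^* - x_i|$. The key is to control the unscaled distance sum at $x_d$ by the unscaled distance sum at $y^*$. Since the dictator's own term vanishes, I would apply the triangle inequality $|x_d - x_i| \le |x_d - y^*| + |y^* - x_i|$ to the remaining $n-1$ agents, giving $\sum_i |x_d - x_i| \le (n-1)|x_d - y^*| + \sum_{i \ne d}|y^* - x_i|$. Because $|x_d - y^*|$ is itself one of the summands of $S$, rewriting $\sum_{i\ne d}|y^*-x_i| = S - |x_d - y^*|$ and then bounding $|x_d - y^*| \le S$ collapses this to $\sum_i |x_d - x_i| \le (n-1)S$.

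With the unscaled distances under control, I would separate out the scaling factor: the dictator's total cost is at most $(n-1)\,q(x_d)\,S$, whereas the optimal cost is exactly $q(y^*)\,S$. Dividing yields a ratio of $(n-1)\,q(x_d)/q(y^*)$, and since $q(x_d)\le \max_y q(y)$ while $q(y^*)\ge \min_y q(y)$, this is at most $(n-1)r_q$.

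For tightness I would exhibit a family of instances attaining this bound. Choose a scaling function whose global minimum lies at a point $A$ and global maximum at a point $B\ne A$, place the $n-1$ non-dictator agents at $A$, and place the dictator at $B$. I first argue that $y^*_{TC}=A$: moving the facility from $A$ toward $B$ increases the scaling factor (as $q$ is minimized at $A$) and, for $n \ge 2$, does not decrease the aggregate unscaled distance, since the $n-1$ clustered agents each move farther while only the dictator moves closer. Hence the optimal total cost equals $q(A)\,|A-B|=\min_y q(y)\,|A-B|$, whereas the dictator incurs $q(B)(n-1)|A-B|=\max_y q(y)\,(n-1)|A-B|$, so the ratio is exactly $(n-1)r_q$.

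The main obstacle is the tightness direction, specifically verifying that the clustered point $A$ really is the scaled optimum $y^*_{TC}$ rather than some interior point that trades a larger distance for a smaller scaling factor; this is precisely why I place the cluster at a global minimum of $q$, so that any deviation simultaneously worsens both the scaling factor and (for $n\ge 2$) the distance sum. The upper bound is routine once the triangle-inequality manipulation is arranged so that the dictator's own distance to $y^*$ is reabsorbed into $S$, which is what produces the sharp constant $(n-1)$ rather than $n$ or $n+1$.
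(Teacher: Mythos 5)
Your proof is correct and takes essentially the same approach as the paper: the identical tight instance ($n-1$ agents clustered at a global minimum of $q$, the dictator at a global maximum), and an upper bound that separates the scaling factor into $\max_y q(y)/\min_y q(y)\leq r_q$ and controls the dictator's unscaled distance sum by a triangle-inequality argument. The only cosmetic difference is that the paper bounds both costs against the dictator-to-farthest-agent distance $d$ while you bound against the optimal unscaled sum $S$; your write-up is, if anything, more explicit than the paper's.
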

		\begin{proof}
			Suppose there are $n-1$ agents at the same location away from the dictator. Also suppose that the scaling function is highest at the dictator's location, and lowest at the other $n-1$ agents' location. The total cost approximation ratio is therefore at least $(n-1)r_q$. To see that the approximation ratio cannot be higher than this, note that the total cost corresponding to the dictator mechanism is at most $d(n-1)\max_{y\in X}q(y)$ where $d\in (0,1]$, as the dictator receives $0$ cost, and the optimal total cost is at least $d\min_{y\in X}q(y)$ when not all agents are at the same point. \qed
		\end{proof}
		
		\subsection{Maximum Cost}
  We next prove approximation ratio results for the objective of maximum cost, a standard measure of egalitarian fairness. We begin by giving a lower bound for the maximum cost approximation ratio of any phantom mechanism with scaling.
  \begin{theorem}\label{thm:maxcostgenLB}
      There exists a scaling function $q$ such that any phantom mechanism with scaling has a maximum cost approximation ratio of at least $2r_q$.
  \end{theorem}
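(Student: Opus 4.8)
The plan is to mirror the lower-bound argument of Theorem~\ref{thm:totalcostgen}, reading off the maximum cost rather than the total cost and extracting an extra factor of two from the behaviour of the largest distance. By Lemma~\ref{lem:simpphant} it suffices to restrict attention to phantom mechanisms with scaling that keep one phantom at $0$ and one at $1$, so only the $n-1$ free phantoms are at the mechanism's disposal. I would reuse the same continuous scaling function $q$ that has global maxima (value $\max_y q(y)$) at $0,0.5,1$ and global minima (value $\min_y q(y)$) at $0.25,0.75$, so that $r_q=\frac{\max_y q(y)}{\min_y q(y)}$.

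The idea is to force the facility onto the shared local maximum at $0.5$ using an agent profile whose true midpoint is an adjacent minimum of $q$. Fix an arbitrary such mechanism $f$ and let $\ell$ denote the number of free phantoms lying strictly to the left of $0.5$. I would split into two symmetric cases. If $\ell\le \frac n2-1$, take the even-$n$ profile with $\frac n2$ agents at $0$ and $\frac n2$ at $0.5$; a direct count of the $2n+1$ sorted values shows that at most $n$ of them lie strictly below $0.5$ while at least $n+1$ are at most $0.5$, so the median---and hence the facility---sits exactly at $0.5$. If instead $\ell\ge \frac n2$, then at most $\frac n2-1$ free phantoms lie strictly right of $0.5$, and the mirror-image profile with $\frac n2$ agents at $0.5$ and $\frac n2$ at $1$ again pins the facility at $0.5$. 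These cases are exhaustive because $n$ is even.

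In either case the facility is placed at $0.5$, where $q$ attains its maximum and the farthest agent lies at distance $0.5$, giving maximum cost $\max_y q(y)\cdot 0.5$. The optimum instead places the facility at the midpoint of the two clusters, namely $0.25$ (resp.\ $0.75$), which is a minimum of $q$ and halves the largest distance to $0.25$; this yields the upper bound $\min_y q(y)\cdot 0.25$ on the optimal maximum cost. Dividing gives an approximation ratio of at least $\frac{\max_y q(y)\cdot 0.5}{\min_y q(y)\cdot 0.25}=2r_q$. The extra factor of two over the total-cost bound is the point worth emphasising: at both $0.5$ and $0.25$ the \emph{sum} of distances equals $\frac n4$, so the total cost only sees $r_q$, whereas the \emph{maximum} distance halves from $0.5$ to $0.25$, producing the additional factor.

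The main obstacle I anticipate is not the cost computation but the median bookkeeping: I must guarantee that the facility lands precisely on $0.5$ for every possible placement of the free phantoms, and this is what dictates splitting on $\ell$ (free phantoms strictly left of $0.5$) rather than on the cruder count of phantoms in $[0,0.5]$. Care is needed with phantoms sitting exactly at $0.5$ and with the boundary between the two cases, and I would verify the order-statistic count explicitly in each case to confirm that the $(n+1)$-st smallest of the $2n+1$ values equals $0.5$.
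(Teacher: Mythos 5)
Your proposal is correct and follows essentially the same route as the paper's proof: the same reduction via Lemma~\ref{lem:simpphant}, the same scaling function with maxima at $0$, $0.5$, $1$ and minima at $0.25$, $0.75$, the same two agent profiles, and the same cost comparison yielding $2r_q$. If anything, your case split on the number of free phantoms strictly left of $0.5$ (at most $\frac{n}{2}-1$ versus at least $\frac{n}{2}$) is more careful than the paper's split (at most $n-2$ versus all $n-1$ free phantoms in $[0,0.5]$), which as written does not actually force the median to $0.5$ when between $\frac{n}{2}$ and $n-2$ free phantoms lie strictly below $0.5$; your order-statistic bookkeeping closes that gap.
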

  \begin{proof}  
    The proof of this theorem is almost identical to the proof of Theorem~\ref{thm:totalcostgen}; we repeat the key arguments for completeness. By Lemma~\ref{lem:simpphant}, it suffices to consider mechanisms which always place at least one phantom on $0$ and at least one phantom on $1$, so we are concerned with the placement of the remaining $n-1$ phantoms.

    For the instance described in the proof of Theorem~\ref{thm:totalcostgen}, if at most $n-2$ phantoms are placed in $[0,0.5]$, the maximum cost from the facility placement at $0.5$ will be $\max_y q(y)(0.5-0)$, whilst the optimal maximum cost is $\min_y q(y)(0.5-0.25)$. By dividing these terms, we obtain our lower bound of $2r_q$. In the remaining case where all $n-1$ phantoms are placed in $[0,0.5]$, the maximum cost from the facility placement is $\max_y q(y) (1-0.5)$, whilst the optimal maximum cost is $\min_y q(y)(0.75-0.5)$. Again, dividing these terms gives our lower bound of $2r_q$, completing the proof. \qed
  \end{proof}

    Unlike the total cost objective, we find that under the maximum cost objective, any phantom mechanism with scaling which always places at least one phantom on $0$ and at least one phantom on $1$ has the same maximum cost approximation ratio which matches our lower bound.
    \begin{theorem}
        Any phantom mechanism with scaling which, for any $q\in \mathcal{Q}$, places at least one phantom on $0$ and at least one phantom on $1$, has a maximum cost approximation ratio of $2r_q$.
    \end{theorem}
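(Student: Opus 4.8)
The plan is to pair the lower bound of $2r_q$ from Theorem~\ref{thm:maxcostgenLB} with a matching upper bound, showing that any phantom mechanism with scaling that always keeps a phantom on $0$ and a phantom on $1$ has maximum cost approximation ratio at most $2r_q$. The first step is to establish that these two extreme phantoms force the facility into the interval $[x_1,x_n]$, regardless of where the remaining $n-1$ phantoms lie. Since the output is $\med\{x_1,\dots,x_n,p_1,\dots,p_{n+1}\}$, i.e. the $(n+1)$th smallest of these $2n+1$ values, I would count that at least $n+1$ of them are $\geq x_1$ (the $n$ agents together with the phantom on $1$) and at least $n+1$ of them are $\leq x_n$ (the $n$ agents together with the phantom on $0$). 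The first count forces the $(n+1)$th smallest value to be $\geq x_1$, and the second forces it to be $\leq x_n$, so the facility placement $y := f(q,\boldsymbol{x})$ satisfies $x_1 \leq y \leq x_n$.

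Next I would bound both sides of the approximation ratio. For the numerator, since $y\in[x_1,x_n]$ the farthest agent is one of the two extremes, so $\max_i|y-x_i| = \max\{y-x_1,\,x_n-y\} \leq x_n - x_1$, which gives $MC(q,y,\boldsymbol{x}) = q(y)\max_i|y-x_i| \leq (\max_{z\in X}q(z))(x_n-x_1)$. For the denominator, the key quantitative step is a lower bound on the optimal maximum cost: for any placement $z$, the triangle inequality gives $|z-x_1|+|z-x_n|\geq x_n-x_1$, so $\max_i|z-x_i|\geq \tfrac{1}{2}(x_n-x_1)$, and applying this at $z = y^*_{MC}$ together with $q(y^*_{MC})\geq \min_{z\in X}q(z)$ yields $MC(q,y^*_{MC},\boldsymbol{x}) \geq (\min_{z\in X}q(z))\tfrac{x_n-x_1}{2}$. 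Dividing the upper bound by the lower bound, the $(x_n-x_1)$ factors cancel and leave $\frac{2\max_z q(z)}{\min_z q(z)} = 2r_q$, so the ratio is at most $2r_q$; combined with Theorem~\ref{thm:maxcostgenLB} it is exactly $2r_q$.

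I expect the main obstacle to be the first step, namely cleanly arguing that the two extreme phantoms pin the median into $[x_1,x_n]$ no matter how the other phantoms are distributed; the counting argument above is the cleanest route. A secondary point to handle carefully is that the ``maximum distance is at least half the spread'' inequality must be invoked at the genuine optimum $y^*_{MC}$ (to lower bound the optimal cost) rather than at the mechanism's output, and that $q(y^*_{MC})$ is replaced by the global minimum $\min_z q(z)$ while $q(y)$ is replaced by the global maximum $\max_z q(z)$; once these substitutions are justified, the remaining arithmetic is routine.
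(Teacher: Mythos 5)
Your proposal is correct and follows essentially the same route as the paper's proof: invoke Theorem~\ref{thm:maxcostgenLB} for the lower bound, observe that the phantoms at $0$ and $1$ pin the facility into $[x_1,x_n]$, and bound the mechanism's maximum cost by $(x_n-x_1)\max_{y\in X}q(y)$ against an optimal maximum cost of at least $\frac{x_n-x_1}{2}\min_{y\in X}q(y)$. Your median-counting argument and the triangle-inequality justification simply make explicit two steps the paper asserts without detail.
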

  \begin{proof}
      By Theorem~\ref{thm:maxcostgenLB}, it suffices to show that the approximation ratio is at most $2r_q$. To see this, note that under a phantom mechanism with scaling which places at least one phantom on $0$ and at least one phantom on $1$, the facility is always placed within $[x_1,x_n]$. As a result, the mechanism's maximum cost is at most $(x_n-x_1)\max_{y\in X}q(y)$, whilst the optimal maximum cost is at least $\frac{x_n-x_1}{2}\min_{y\in X}q(y)$. \qed
  \end{proof}
  As in the previous section, we can also derive the maximum cost approximation ratios when the scaling functions are restricted such that agents have single-peaked preferences.
 \begin{corollary}\label{cor: 2e}
  For continuous scaling functions, when agents are guaranteed to have single-peaked preferences, any strategyproof and anonymous mechanism has a maximum cost approximation ratio of at least $2e$.
  \end{corollary}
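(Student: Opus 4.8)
The plan is to reduce the statement to a lower bound on the maximum cost approximation ratio of phantom mechanisms with scaling, and then specialize the general bound of Theorem~\ref{thm:maxcostgenLB} to a scaling function that simultaneously guarantees single-peaked preferences and attains the extremal value $r_q = e$. First I would invoke Theorem~\ref{thm:charphant}: since the scaling function is assumed to guarantee single-peaked preferences, every continuous, strategyproof, and anonymous mechanism is a phantom mechanism with scaling, so it suffices to bound the ratio achievable by this class. By Lemma~\ref{lem:simpphant} I may further restrict attention to mechanisms that always place at least one phantom at $0$ and one at $1$, leaving $n-1$ free phantoms whose placement the adversary must defeat.

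Next I would exhibit a concrete scaling function that guarantees single-peaked preferences and has $r_q = e$. By Theorem~\ref{thm:genspe} the value $r_q = e$ is the largest possible under the single-peakedness condition $q(y) - |q'(y)| \ge 0$ of Theorem~\ref{thm:charcontnotdiff}, and it is attained, for instance, by $q(y) = e^{y}$ (or its reflection), for which $q(y) - |q'(y)| = 0$. With such a $q$ fixed, the phantoms $p_1(q),\dots,p_{n+1}(q)$ become fixed points of $[0,1]$, and the plan is to feed this scaling function into the adversarial construction underlying Theorem~\ref{thm:maxcostgenLB}: for any placement of the free phantoms, a two-cluster agent profile (chosen via the same case split on how many phantoms lie on each side of the cluster gap) forces the facility onto an agent location at the high-scaling end of the support, while the maximum-cost-optimal placement sits near the midpoint at the low-scaling end, yielding a ratio of $2r_q = 2e$.

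The main obstacle is that the lower-bound instance of Theorem~\ref{thm:maxcostgenLB} relies on a scaling function with several interleaved local maxima and minima (at $0,0.5,1$ and $0.25,0.75$), and such a function does \emph{not} itself guarantee single-peaked preferences. The single-peakedness condition forces $\ln q$ to be $1$-Lipschitz, so any scaling function realizing $r_q = e$ must be essentially monotonic, with its global maximum and minimum at the two endpoints of $[0,1]$. The delicate step is therefore to redesign the profile so that, against this monotonic scaling function, every phantom mechanism is still driven to a facility placement whose \emph{scaled} maximum distance exceeds the optimum by the full factor $2r_q$. This requires carefully choosing the agent clusters and tracking the values $q$ takes at the forced placement versus the optimal placement, rather than simply reusing the symmetric valley of Theorem~\ref{thm:maxcostgenLB}. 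Once the instance is verified both to attain $2e$ and to lie in the single-peaked-guaranteeing regime, the corollary follows by combining it with the characterization of Theorem~\ref{thm:charphant}.
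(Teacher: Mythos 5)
You have correctly reduced the problem to phantom mechanisms with scaling via Theorem~\ref{thm:charphant} and Lemma~\ref{lem:simpphant}, and—more importantly—you have put your finger on exactly the issue that the paper's own (implicit) derivation glosses over: the adversarial scaling function in Theorem~\ref{thm:maxcostgenLB} has interleaved extrema at $0,0.25,0.5,0.75,1$, and no such function can simultaneously guarantee single-peaked preferences and have $r_q$ near $e$. Indeed, the condition $q(y)-|q'(y)|\ge 0$ of Theorem~\ref{thm:charcontnotdiff}, run through the argument of Theorem~\ref{thm:genspe} in both directions, gives $q(y)/q(y')\le e^{|y-y'|}$ for all $y,y'$ (that is, $\ln q$ is $1$-Lipschitz); hence the W-shaped construction can only reach $r_q\le e^{1/4}$, and $r_q=e$ forces $q$ to be, up to a constant and reflection, exactly $e^{y}$, monotone on all of $[0,1]$. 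Your diagnosis of the obstacle is right, and it is genuinely absent from the paper.

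The problem is the step you defer to the end: ``redesign the profile so that, against this monotonic scaling function, every phantom mechanism is still driven to a ratio of $2r_q=2e$.'' This step is not delicate—it is impossible, and this is where the proposal cannot be completed. The same Lipschitz property that forces monotonicity caps the achievable ratio strictly below $2e$. Let $q$ be \emph{any} scaling function guaranteeing single-peaked preferences, let $\boldsymbol{x}$ be any profile, write $d=x_n-x_1$ and $y_0=\frac{x_1+x_n}{2}$. Any phantom mechanism with scaling that survives Lemma~\ref{lem:simpphant} keeps a phantom at $0$ and at $1$ for every $q$, and therefore always places the facility at some $y_f\in[x_1,x_n]$, so
\[
\frac{MC(q,y_f,\boldsymbol{x})}{MC(q,y^*_{MC},\boldsymbol{x})}\;\le\;\frac{q(y_f)\,d}{q(y_0)\,d/2}\;=\;2\,\frac{q(y_f)}{q(y_0)}\;\le\;2e^{|y_f-y_0|}\;\le\;2e^{d/2}\;\le\;2\sqrt{e}\;\approx\;3.30\;<\;2e.
\]
In particular the median mechanism is continuous, anonymous, and strategyproof on this domain, yet its maximum cost approximation ratio over all single-peaked-guaranteeing $q$ is at most $2\sqrt{e}$; no choice of $q$ and profile can push it to $2e$. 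The intuition is that the factor $e$ in the scaling gap requires $|y_f-y^*_{MC}|=1$, while the factor $2$ in the distance gap requires $y^*_{MC}$ to sit near the midpoint of $[x_1,x_n]$ with $y_f$ at an endpoint; with $\ln q$ being $1$-Lipschitz these two requirements are mutually exclusive. So the gap you identified is fatal not only to your repair strategy but to the statement itself: the corollary does not follow from combining Theorems~\ref{thm:maxcostgenLB} and~\ref{thm:genspe} (which is all the paper offers, since the extremal instances of those two theorems are incompatible), and the correct bound for this restricted domain must lie at or below $2\sqrt{e}$ rather than at $2e$.
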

   \begin{corollary}\label{cor: 4}
  For piecewise linear scaling functions with $k$ line segments, when agents are guaranteed to have single-peaked preferences, any strategyproof and anonymous mechanism has a maximum cost approximation ratio of at least $2(1+\frac{1}{k})^k$.
  \end{corollary}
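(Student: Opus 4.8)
The plan is to derive the bound by combining the structural characterization with the two lower-bound ingredients already established. First I would invoke Theorem~\ref{thm:charphant}: since we are in the regime where the scaling function guarantees single-peaked preferences (the piecewise-linear conditions of Proposition~\ref{prop:piece}), every continuous, strategyproof, and anonymous mechanism is a phantom mechanism with scaling, so it suffices to lower-bound the maximum cost approximation ratio of an arbitrary such phantom mechanism. Next I would apply Lemma~\ref{lem:simpphant}: if for some admissible $q$ the mechanism fails to place a phantom at both $0$ and $1$, its ratio is already unbounded and hence exceeds $2(1+\frac1k)^k$, so I may assume throughout that at least one phantom sits at each endpoint and the facility is always confined to $[x_1,x_n]$.

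With this reduction in place, the core of the argument reuses the two-case forcing instance from the proof of Theorem~\ref{thm:maxcostgenLB}, which shows that for a suitable scaling function any phantom mechanism suffers a maximum cost ratio of at least $2r_q$. The new content is to run that argument with a scaling function drawn from the restricted class. Concretely, I would take the extremal piecewise-linear function identified implicitly in the proof of Theorem~\ref{thm:piecelin}: one whose $k$ segments split the relevant monotonic range into equal-width pieces, each meeting the single-peaked tightness $b_j=(1-s_{j-1})a_j$ (resp.\ $b_j=-(1+s_j)a_j$), so that by the arithmetic-mean--geometric-mean step the global ratio attains $r_q=(1+\frac1k)^k$. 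Feeding this $q$ into the forcing instance, the facility is driven to a point of scaling $\max_y q(y)$ while the optimal placement sits at a point of scaling $\min_y q(y)$ with half the maximal distance, so the ratio is exactly $2r_q=2(1+\frac1k)^k$, and the matching upper-bound structure from the companion theorem confirms tightness.

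The step I expect to be the main obstacle is reconciling the geometry of the forcing instance with the extremal scaling function. The factor of $2$ in Theorem~\ref{thm:maxcostgenLB} comes from placing the optimum at the midpoint of a cluster, i.e.\ at an \emph{interior} minimum of $q$ equidistant from the agents, whereas attaining the full $r_q=(1+\frac1k)^k$ demands that the minimum-to-maximum spread of $q$ occupy essentially the entire monotonic run. These two requirements pull in opposite directions, so the delicate part is designing $q$ (and the agent clusters) so that the displacement from the optimal point to the forced facility location traverses the whole extremal run while $q$ remains single-peaked-guaranteeing everywhere and the median still lands on the intended maximum-scaling point for every possible distribution of the remaining $n-1$ phantoms. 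Once such a configuration is exhibited and the per-segment single-peaked constraints are verified, the numeric bound follows immediately from Theorem~\ref{thm:piecelin}.
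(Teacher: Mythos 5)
Your route is the one the paper itself gestures at (it supplies no explicit proof of this corollary): use Theorem~\ref{thm:charphant} to reduce to phantom mechanisms with scaling, Lemma~\ref{lem:simpphant} to pin one phantom at $0$ and one at $1$, and then feed an extremal scaling function from Theorem~\ref{thm:piecelin} into the forcing instance of Theorem~\ref{thm:maxcostgenLB}. The problem is that your proof stops exactly at the step you flag as "delicate": you never exhibit a $q$ that simultaneously attains $r_q=(1+\frac{1}{k})^k$ and supports the forcing instance, and in fact no such $q$ exists, so the argument cannot be completed along these lines.

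Concretely, equality in Theorem~\ref{thm:piecelin} forces monotonicity: $r_q$ telescopes as $\prod_j q(s_j)/q(s_{j-1})$ over the segments between the global minimum and maximum, each factor is at most $1+(s_j-s_{j-1})$, and the product reaches $(1+\frac{1}{k})^k$ only if the extrema sit at $0$ and $1$, all $k$ segments lie between them, and every factor equals $1+\frac{1}{k}$, i.e., $q$ is monotone on all of $[0,1]$. Such a $q$ is incompatible with the forcing construction, whose factor of $2$ requires the facility to be driven to a global \emph{maximum} of $q$ inside the agents' convex hull while the optimum sits at a global \emph{minimum} that is the midpoint of the two agent clusters; moreover, since the $n-1$ free phantoms can cover any single valley, the argument needs two interior valleys flanking the peak, which caps the usable peak-to-valley distance at $\frac{1}{4}$ and hence caps anything this construction can certify at $2e^{1/4}\approx 2.57$, well below $2(1+\frac{1}{k})^k\geq 4$. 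Worse, monotone admissible functions can never witness a bound above $2$: if $q$ is increasing and satisfies $q\geq q'$, the phantom mechanism with scaling that places all free phantoms at $0$ (so $f(q,\mathbf{x})=x_1$) incurs maximum cost $q(x_1)(x_n-x_1)$, while every alternative $y\geq x_1$ has $MC(q,y,\mathbf{x})\geq q(x_1)\cdot\frac{x_n-x_1}{2}$ and every $y<x_1$ has $MC(q,y,\mathbf{x})\geq q(x_1)e^{y-x_1}(x_n-y)\geq q(x_1)(x_n-x_1)$, so its approximation ratio is at most $2$; symmetrically for decreasing $q$. For $k=1$ every admissible scaling function is linear, hence monotone, and this mechanism already beats the claimed bound of $4$, so the configuration you hope to exhibit provably does not exist there. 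The gap is therefore genuine and fatal to this approach; it is inherited from the paper's own implicit derivation, and closing it would require either a fundamentally different lower-bound construction or a weakening of the stated bound.
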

  Finally, we prove the maximum cost approximation ratio for the dictator mechanism, which ensures strategyproofness for all scaling functions.
        \begin{proposition}
        The dictator mechanism has a maximum cost approximation ratio of $2r_q$.
        \end{proposition}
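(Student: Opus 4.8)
The plan is to prove the claimed ratio by establishing two matching inequalities, exactly mirroring the structure of the total cost analogue: first a lower bound construction exhibiting an instance that forces a ratio of at least $2r_q$, and then a worst-case upper bound showing that no instance can exceed $2r_q$. Since the dictator mechanism places the facility at the reported location of a fixed designated agent, say the agent at $x_d$, its maximum cost is always $q(x_d)\max_i|x_d-x_i|$, and the analysis reduces to comparing this quantity against the optimal maximum cost.

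For the lower bound, I would fix the dictator to be an agent located at $0$, place (at least) one other agent at $1$, and choose a continuous scaling function $q$ whose maximum value $\max_y q(y)$ is attained at $y=0$ and whose minimum value $\min_y q(y)$ is attained at $y=0.5$. The dictator mechanism then places the facility at $0$, incurring maximum cost $q(0)\cdot 1=\max_y q(y)$. On the other hand, placing the facility at the midpoint $0.5$ yields maximum cost $q(0.5)\cdot 0.5=0.5\min_y q(y)$, so the optimal maximum cost is at most this value. Taking the ratio then gives at least $\frac{\max_y q(y)}{0.5\min_y q(y)}=2r_q$.

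For the upper bound, I would bound numerator and denominator separately. Since the dictator sits at some $x_d\in[x_1,x_n]$, the largest unscaled distance from $x_d$ to any agent is $\max(x_d-x_1,x_n-x_d)\le x_n-x_1$, so the dictator mechanism's maximum cost is at most $(x_n-x_1)\max_y q(y)$. Conversely, for any facility location $y$, the maximum unscaled distance $\max_i|y-x_i|=\max(y-x_1,x_n-y)$ is at least $\frac{x_n-x_1}{2}$, with equality only at the midpoint; hence the optimal maximum cost is at least $\frac{x_n-x_1}{2}\min_y q(y)$. Dividing these bounds yields a ratio of at most $\frac{(x_n-x_1)\max_y q(y)}{\frac{x_n-x_1}{2}\min_y q(y)}=2r_q$ (the case $x_1=x_n$ being degenerate, where both costs are $0$), which matches the lower bound and completes the proof.

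The main step — though not especially hard — is the denominator estimate in the upper bound: recognizing that the maximum unscaled distance from any point to the agent set is minimized at the midpoint of the two extreme agents, giving the half-diameter lower bound $\frac{x_n-x_1}{2}$. This half-diameter factor is precisely what produces the factor of $2$ over the raw scaling ratio $r_q$ and is also what makes the lower bound construction tight. Everything else reduces to pairing the extremal values of $q$ with the extremal facility placements.
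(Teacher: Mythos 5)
Your proof is correct and follows essentially the same approach as the paper: a lower-bound instance with the dictator at the scaling function's maximum and the midpoint of the extreme agents at its minimum, combined with the upper bound obtained by comparing the mechanism's cost of at most $(x_n-x_1)\max_{y\in X}q(y)$ against the optimal cost of at least $\frac{x_n-x_1}{2}\min_{y\in X}q(y)$. The only difference is cosmetic (you fix the dictator at $0$ and the far agent at $1$, while the paper uses a general profile with $n-1$ co-located agents), and your explicit handling of the degenerate case $x_1=x_n$ is a small added care the paper omits.
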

        \begin{proof}
        Consider an agent location profile where $x_1=x_2=\dots=x_{n-1}$ and $x_n\neq x_1$. When the scaling function has its minimum value at $\frac{x_1+x_n}{2}$ and its maximum value on the facility/dictator, we have $\frac{MC(q,f(q,\boldsymbol{x}),\boldsymbol{x})}{MC(q,y^*_{MC},\boldsymbol{x})}=2r_q$, showing that the approximation ratio is at least $2r_q$. For the matching upper bound, we note that the mechanism's maximum cost is at most $(x_n-x_1)\max_{y\in X}q(y)$ and the optimal maximum cost is at least $\frac{x_n-x_1}{2}\min_{y\in X}q(y)$. \qed
        \end{proof}
        
        \section{Discussion}
In this paper, we have initiated the study of facility location problems where the agents' costs include a multiplicative scaling factor which depends on where the facility is placed. This problem applies to many real-world scenarios where the facility's effectiveness depends on external factors influencing its location. Our results focus on the objectives of total cost and maximum cost, and on both continuous and piecewise linear scaling functions.

Although we focus on single-peaked agent preferences and have some mechanism design results for general agent preferences, the agent preferences in our setting are a strict subspace of general preferences, taking the form where the metric distance is multiplied by some continuous scaling function. An open question is to characterize the deterministic mechanisms satisfying strategyproofness and surjectivity under these preferences, which we conjecture to be dictatorships.

There are many natural extensions to our paper. We can consider multiple facilities and multiple dimensions, as well as randomized mechanisms. We could also restrict further the space of piecewise linear scaling functions to single-dipped or single-peaked linear scaling functions, and find even lower approximation ratio bounds. Finally, our concept of scaling functions can be applied to existing variations of the facility location problem, such as capacitated facilities \citep{ACLL+20} and obnoxious facility location \citep{CYZ13}.

\section*{Acknowledgements}
The authors would like to thank Matthias Greger for his extremely helpful comments and feedback. We also thank the Institute for Mathematical Sciences, National University of Singapore for facilitating our discussion with Matthias.

\newpage
\bibliographystyle{ACM-Reference-Format} 
\bibliography{citation}
		
\end{document}